\DeclarePairedDelimiter{\ceil}{\lceil}{\rceil}
\newcommand{\RNum}[1]{\uppercase\expandafter{\romannumeral #1\relax}}
\title{An event-triggered transmission scheduling strategy for remote state estimation in the presence of an eavesdropper}
\author[1\authfn{1}]{Jingyi Lu}
\author[1\authfn{1}]{Alex S. Leong}
\author[1\authfn{1}]{Daniel E. Quevedo}
\affil[1]{Department of Electrical Engineering (EIM-E), Paderborn University, Paderborn, Germany}
\begin{document}

\maketitle

\begin{abstract}
We consider a remote state estimation problem in the presence of an eavesdropper over packet dropping links. A smart sensor transmits its local estimates to a legitimate remote estimator, in the course of which an eavesdropper can randomly overhear the transmission. This problem has been well studied for unstable dynamical systems, but seldom for stable systems. In this paper, we target at stable and marginally stable systems and aim to design an event-triggered scheduling strategy by minimizing the expected error covariance at the remote estimator and keeping that at the eavesdropper above a user-specified lower bound. To this end, we model the evolution of the error covariance as an infinite recurrent Markov chain and develop a recurrence relation to describe the stationary distribution of the state at the eavesdropper. Monotonicity and convergence properties of the expected error covariance are further investigated and employed to solve the optimization problem. Numerical examples are provided to validate the theoretical results.

\keywords{Kalman filter, remote state estimation, eavesdropping, secure transmission}
\end{abstract}

\section{Introduction}
Wireless sensor networks (WSN) have received
significant attention in recent decades, and have been extensively applied to various fields such as smart grid, transportation, and industrial control. The techniques in WSN substantially increase system agility and offer more opportunities in remote sensing and remote control. Whereas, due to the broadcast nature of wireless communication, information can be intercepted by potential eavesdroppers, making private information leakage and potential economic loss inevitable \cite{chong2019tutorial,zhang2016privacy}. The security issue herein necessitates the study of information security and confidential communication. 

Traditionally, cryptography-based tools \cite{ganesan2003analyzing,lagendijk2012encrypted,darup2017towards} are utilized to implement information security. However, this approach is disadvantageous in some practical situations due to the large computation power required for encryption and decryption.  
Differential privacy \cite{dwork2011differential,cortes2016differential} is another approach to guarantee confidentiality. In differential privacy algorithms, additional noises are added to the signal to be sent. Consequently, both the information received by the eavesdropper and the estimator is disrupted, degrading estimation accuracy. Recently, information theoretic methods in the physical layer have gained increasing attention \cite{liang2008secure,shiu2011physical}. By exploiting physical layer characteristics of the wireless channel, this approach can effectively ensure a fairly good estimation at the authorized user while keep the estimation error at the eavesdropper above a certain level. Specifically, in \cite{reboredo2013filter,guo2017estimation,guo2016distortion}, estimation of constants and i.i.d. sources in the presence of eavesdroppers were studied in the context of physical layer security ideas using techniques such as transmit filter design , power control and addition of artificial noise. 

For dynamical system state estimation, a stochastic transmission strategy which randomly withholds sensor information was proposed in \cite{tsiamis2017state}. Under the condition that the user's packet reception rate is larger than the eavesdropper's interception rate, the eavesdropper's expected estimation error grows unboundedly while the user's expected error is kept bounded. Similar results are achieved in \cite{leong2018transmission} by a deterministic threshold policy without the prerequisite on reception rate, namely the unboundedness of the expected error at the eavesdropper can be obtained for all eavesdropping probabilities strictly less than one. Further in \cite{leong2019information}, upper and lower bounds on the information revealed to the eavesdropper are derived based on mutual information. Instead of looking at the expected error covariance, \citeauthor{tsiamis2019state}\cite{tsiamis2019state} proposed to drive the minimum mean square error at the eavesdropper to be unbounded by designing a coding mechanism with system dynamics and process noises exploited. Note that the unboundedness of the eavesdropper's covariance is achievable for unstable systems, but not stable or marginally stable systems. Since most of the open loop unstable systems are regulated by a controller when applied in practice, stable systems are commonly seen, and it is essential to explore such strategies for stable systems. 

Previously, a coding scheme was designed in \cite{tsiamis2018state} at the sensor for stable systems with an invertible dynamic matrix. The sensor encoded the current state as a weighted difference of a reference state, and transmitted the encoded state to the remote estimator. This scheme can ensure optimal estimation at the legitimate remote estimator, whilst making the error covariance at the eavesdropper asymptotically converge to its upper bound. In this scheme, an agreement should be reached on packet reception between the estimator and the sensor. Therefore, a precise acknowledgment from the estimator is required. 
In addition to coding the states, transmission scheduling over finite horizon can also be applied to stable systems to preserve privacy. In \cite{leong2018transmission}, an optimization problem was formulated by minimizing a linear combination of the expected error covariance of the remote estimator and the negative of the expected error covariance at the eavesdropper. Such a problem can be solved by dynamic programming. Due to the curse of dimensionality, its computation cost exponentially increases with the estimation horizon. Still, this work shows that the optimal solution essentially exhibits a thresholding behavior in the estimation error covariance. This structural property is illuminating. 

Motivated by this, in the present manuscript we propose a threshold-type event-triggered scheduling strategy specifically for stable systems. In particular, we formulate an optimization problem which takes the transmission threshold as a decision variable. The expected estimation error covariance at the legitimate estimator is minimized in the objective function and the eavesdropper's covariance is kept above a user specified value as a hard constraint. The crux of the design is how to solve the optimization problem. A major difficulty rests in the inaccessibility of the eavesdropper's covariance. Specifically, since the transmission command solely depends on the legitimate estimator's covariance and the communication channels are uncorrelated, the time at which the transmission occurs seems to be random to the eavesdropper, making it difficult to quantify the expected error covariance at the eavesdropper. To solve this problem, we combine the estimator's and eavesdropper's covariances as an artificial state whose evolution is modeled as an infinite Markov chain. Thereby, we set up the connection between the time instances at which transmission occurs and the estimation error covariance at the eavesdropper. Furthermore, it is shown that the Markov chain provides a recursive expression of the steady state distribution at the eavesdropper. Such a recursive expression can be viewed as a dynamic linear time invariant system with constant time delay, which opens the door to investigate monotonicity and convergence of the estimation error covariances. 
These properties allow us to solve the optimization problem by a simple bisection algorithm. 

The paper is organized as follows. Section 2 describes the system model and the problem of interest. Section 3 details the Markov chain modeling and analyzes properties on the eavesdropper's covariance. Section 4 provides a bisection algorithm to the optimization problem, as well as optimality analysis. Section 5 validates the theoretic results via numerical simulations. Finally, conclusions are drawn in Section 6. 

\textit{Notations}: 
For a matrix $X$, we use $X^\top$, $\text{tr}X$ and $\rho(X)$ to denote its transpose, trace and spectral radius. If $X$ is positive semi-definite, it is written as $X\geq0$. Denote $\mathbf{I}$ as the identity matrix. $\mathbb{N}$ is the set of non-negative integers. $\mathbb{E}$ is the expectation operator. $\mathbb{P}$ denotes the probability.

\section{Problem formulation}
\subsection{System model}
Consider a discrete time linear time invariant system as
\begin{align}
x_{k+1} = A x_{k} + w_k \nonumber	
\end{align}
where $x_k\in\mathbb{R}^{n_s}$ and $w_k$ is the process noise. The state is measured by a sensor through a noisy channel
\begin{align}
y_k = C x_k + v_k,\nonumber
\end{align}
with $y_k\in\mathbb{R}^{n_y}$ and $v_k$ denoting the measurement noise. We consider that the system is stable or marginally stable, namely the spectral radius of the matrix $A$, denoted as $\rho(A)$, is smaller than or equal to $1$. Moreover, it is postulated that \textbf{1)} $w_k$ and $v_k$ are mutually independent i.i.d Gaussian random processes with zero mean and covariance matrices $Q>0$ and $R>0$; \textbf{2)} the pair $(A,C)$ is detectable and the pair $(A,Q^{\frac{1}{2}})$ is stabilizable. 

\begin{figure}[t]
	\centering
	\includegraphics[width=0.5\textwidth]{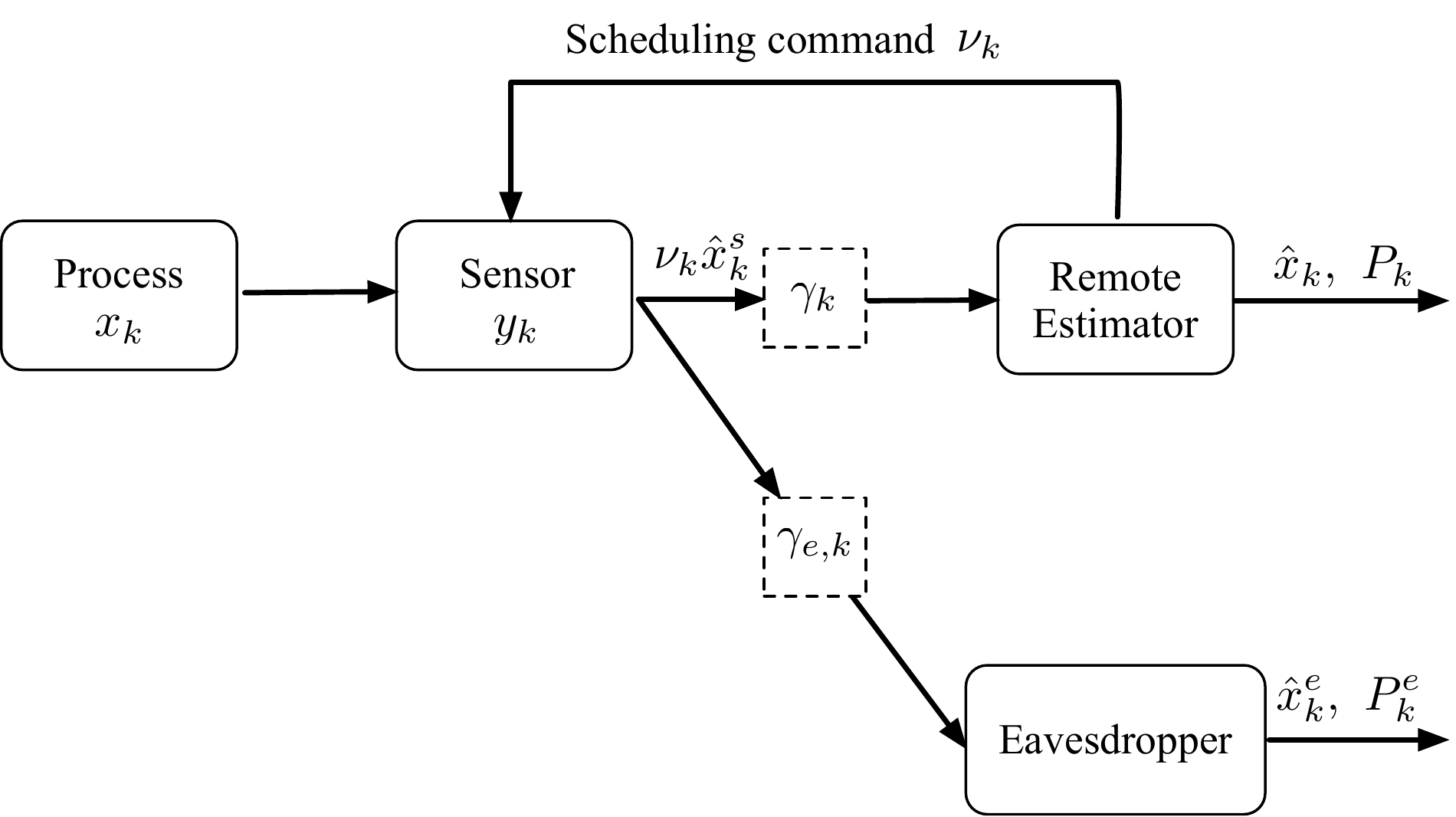}
	\caption{Remote state estimation with eavesdroppers} \label{system_model}
\end{figure}

As shown in Fig. \ref{system_model}, a sensor takes measurements of the process and sends its local state estimates to a legitimate remote estimator through a link with
 random packet dropouts. The transmission is controlled by the scheduling command $\nu_k$. When $\nu_k=1$, the local state estimate is transmitted from the sensor to the remote estimator. Each time the sensor transmits, an eavesdropper can successfully overhear the transmission with a certain probability via another packet dropping channel. Assume the sensor is a smart sensor capable of running a standard Kalman filter given in (\ref{kf1}) and (\ref{kf2}) and transmits the posterior local state estimate $\hat{x}_{k|k}^s$ to the receivers. 

\textbf{Prediction step}:
\begin{equation}
\begin{aligned}
\hat{x}_{k|k-1}^s \triangleq & \mathbb{E}[x_k|y_0,\dots,y_{k-1}]=A \hat{x}_{k-1|k-1}^s \\
\tilde{x}_{k|k-1}^s \triangleq & x_k- \hat{x}_{k|k-1}^s = A \tilde{x}^s_{k-1|k-1} + w_{k-1}\\
P_{k|k-1}^s \triangleq & \mathbb{E}[\tilde{x}_{k|k-1}^s\tilde{x}_{k|k-1}^{s^\top}|y_0,\dots,y_{k-1}] = A P_{k-1|k-1}^sA^\top +Q
\end{aligned} \label{kf1}
\end{equation}

\textbf{Update step}:
\begin{equation}
\begin{aligned}
\hat{x}_{k|k}^s \triangleq & \mathbb{E}[x_k|y_0,\dots,y_{k-1},y_k]= \hat{x}_{k|k-1}^s+ K_k(y_k-C\hat{x}_{k|k-1}^s) \\
\tilde{x}_{k|k}^s \triangleq & x_k- \hat{x}_{k|k}^s = (\mathbf{I}_{n_x}-K_kC)A\tilde{x}_{k-1|k-1}^s+(\mathbf{I}_{n_x}-K_kC)w_{k-1}-K_kv_k\\
P_{k|k}^s \triangleq & \mathbb{E}[\tilde{x}_{k|k}^s\tilde{x}_{k|k}^{s^\top}|y_0,\dots,y_{k-1}] = (\mathbf{I}_{n_x}-K_kC)P_{k|k-1}^s
\end{aligned} \label{kf2}
\end{equation}
with $K_k$ defined as
\begin{align}
K_k \triangleq P_{k|k-1}^sC^\top (CP_{k|k-1}^sC^\top +R)^{-1}.\nonumber
\end{align}

Let $\gamma_k$ be an indicator variable such that $\gamma_k=1$ if the transmission is successfully received by the remote estimator and $\gamma_k=0$ otherwise. Similarly, define $\gamma_k^e$ as the indicator variable at the eavesdropper, such that $\gamma_k^e=1$ if the sensor transmission at time $k$ is overheard by the eavesdropper and $\gamma_k^e=0$ otherwise. Assume ${\gamma_k}$ and $\gamma_k^e$ are mutually independent and both follow i.i.d. Bernoulli distribution with 
\begin{equation}
\begin{aligned}
&\mathbb{P}[\gamma_k=1~|~\nu_k=1]=\lambda \quad  \mathbb{P}[\gamma_k^e=1~|~\nu_k=1] =\lambda_e,  \\
&\mathbb{P}[\gamma_k=1~|~\nu_k=0]=0 \quad  \mathbb{P}[\gamma_k^e=1~|~\nu_k=0] =0.
\end{aligned} \label{dropout}
\end{equation}
Since the local Kalman filter converges to the steady state at an exponential rate, we assume that it runs at steady state for simplicity in presentation, namely, $P_{k|k}^s=\bar{P}$ with $\bar{P}$ obtained by solving a standard Riccati equation. For brevity, we denote the posterior estimate $\hat{x}_{k|k}^s$ as $\hat{x}_k^s$. As illustrated in \cite{leong2018transmission}, given that the scheduling command $\nu_k$ is independent of the state $x_k$, the optimal remote estimator has the form
\begin{align}
\hat{x}_{k}  =  \left\{\begin{array}{cl} \hat{x}_{ k}^s, & 
\gamma_{k} \nu_k= 1 \\ A \hat{x}_{k-1}, & 	\gamma_{k} \nu_k= 0, 
\end{array}  \right. \quad 
P_{k}  = \left\{\begin{array}{cl}  \overline{P}  &   \textnormal{if
	$\gamma_{k} \nu_k=  1$} \\ 
f(P_{k-1}),  &
\textnormal{otherwise},   \end{array} \right. ,\label{estimator}
\end{align}
where $	f(X) \triangleq A X A^T + Q$. Similarly, the optimal estimator at the eavesdropper has the form
\begin{align}
\hat{x}_{k}^e  =  \left\{\begin{array}{cl} \hat{x}_{ k}^s, & 
\gamma_{k}^e \nu_k= 1 \\ A \hat{x}_{k-1}^e, & 	\gamma_{k}^e \nu_k= 0, 
\end{array}  \right. \quad 
P_{k}^e  = \left\{\begin{array}{cl}  \overline{P}  &   \textnormal{if
	$\gamma_{k} \nu_k=  1$} \\ 
f(P_{k-1}^e),  &
\textnormal{otherwise},   \end{array} \right. .\nonumber
\end{align}
Define a set
\begin{align}
\mathbb{S} \triangleq \{ \overline{P}, f(\overline{P}), f^2 (\overline{P}),f^3 (\overline{P}), \dots\}.\nonumber
\end{align}
Here $f^n(\cdot)$ denotes the $n-$ fold composition of $f(\cdot)$ with $f^0(X)=X$. According to \cite{shi2012scheduling}, $\mathbb{S}$ is totally ordered with
\begin{align}
\bar{P}\leq f(\bar{P})\leq f^2(\bar{P})\leq \dots.\label{p_mono}
\end{align}
It can be verified that the set $\mathbb{S}$ consists of all possible values of $P_k$ and $P_k^e$. Denote $n_k$ and $n_k^e$ as the number of time steps since the last successful receipt by the remote estimator and the eavesdropper respectively, namely 
\begin{align}
n_k  \triangleq \min \{\tau \geq 0: \gamma_{k-\tau}=1\} \quad n_k^e \triangleq \min \{\tau \geq 0: \gamma_{k-\tau}^e=1\}.\nonumber
\end{align} 
Then, $P_k$ and $P_k^e$ can be expressed in terms of $n_k$ and $n_k^e$ as
\begin{align}
P_k = f^{n_k}(\bar{P}) \quad P_k^e = f^{n_k^e}(\bar{P}).\nonumber
\end{align}
\begin{remark}
As shown in Fig. \ref{system_model}, it is assumed that the channel used to transmit the scheduling command is reliable with no packet dropout. This assumption can be easily relaxed by equivalently attributing the packet dropouts in the command channel to the state transmission channel. Specifically, let $\gamma_k^v$ be an indicator variable such that $\gamma_k^v=1$ if the scheduling command is successfully transmitted and $\gamma_k^v=0$ otherwise. Here $\gamma_k^v$ should be independent of $\gamma_k$ and $\gamma_k^e$. Then, the optimal remote estimator is cast as
\begin{align}
\hat{x}_{k}  =  \left\{\begin{array}{cl} \hat{x}_{ k}^s, & 
\gamma_{k}\gamma_k^v \nu_k= 1 \\ A \hat{x}_{k-1}, & 	\gamma_{k} \gamma_k^v\nu_k= 0. 
\end{array}  \right. \quad 
P_{k}  = \left\{\begin{array}{cl}  \overline{P}  &   \textnormal{if
	$\gamma_{k} \gamma_k^v\nu_k=  1$} \\ 
f(P_{k-1}),  &
\textnormal{otherwise},   \end{array} \right. .\nonumber
\end{align}
If we introduce a new binary variable $\gamma_k^s = \gamma_k\gamma_k^v$, it can be noticed that the estimator has the same form as (\ref{estimator}). Therefore, packet dropouts in the command channel can also be addressed in the current framework. 
\end{remark}
\nopagebreak[4]
\subsection{Problem of interest}
In this paper, we aim to design a threshold type event-triggerred transmission policy as 
\begin{align}
\nu_k(P_{k-1})  =  \left\{\begin{array}{cl} 1, & 
P_{k-1} \geq f^{\bar{t}}(\bar{P}) \\ 0, & P_{k-1} < f^{\bar{t}}(\bar{P}), 
\end{array}  \right. \label{policy}
\end{align}
which implies that the sensor transmits if and only if the expected error covariance at the legitimate estimator is greater than $f^{\bar{t}}(\bar{P})$. We endeavor to seek the optimal threshold $\bar{t}$ such that the average of the expected error covariance is minimized over an infinite horizon and that at the eavesdropper is kept larger than a user specified value. In particular, such a design can be formulated as an optimization problem with the transmission threshold $\bar{t}$ taken as a decision variable:

\textbf{\underline{Problem 1:}}
\begin{align}
\min_{\bar{t}} ~& ~J(\bar{t})=\lim_{K\rightarrow\infty}\sum_{k=0}^K\frac{1}{K}\text{tr}\mathbb{E}[P_{k}] \nonumber\\
\text{s.t.} ~~& ~\lim_{K\rightarrow\infty}\sum_{k=0}^K\frac{1}{K}\text{tr}\mathbb{E}[P_{k}^e ] \geq \underline{b},\nonumber
\end{align}
Here $\underline{b}$ is a user specified lower bound on the error covariance at the eavesdropper.
\begin{remark}
If $\rho(A)<1$, according to \cite{shi2012scheduling}, the averaged expected error covariance is bounded. Therefore, Problem 1 cannot be guaranteed to be feasible for all values of $\underline{b}$. A necessary and sufficient condition on $\underline{b}$ to ensure feasibility is that 
\begin{align}
\underline{b} \leq \text{tr} f^{\infty}(\bar{P})= \sum_{k=0}^{\infty} \text{tr} A^k\bar{P} \left(A^\top\right)^k,\nonumber
\end{align}
If $\rho(A)=1$, there is no such restriction since $\text{tr} f^k(\bar{P})$ grows to infinity when $k$ goes to infinity. 
\end{remark}
 In \cite{leong2015sensor}, it is shown that the evolution of the error covariance at the remote estimator can be modeled as a recurrent Markov chain whose stationary probability follows
\begin{align}
\pi_j(\bar{t}) = \left\{\begin{array}{ccl}
\frac{\lambda}{\lambda \bar{t}+1} &,& j = 0,\dots,\bar{t}\\
\frac{(1-\lambda)^{j-\bar{t}}\lambda}{\lambda \bar{t}+1} &,& j=\bar{t}+1,\bar{t}+2,\dots 
\end{array}\right. \label{dis_pi}
\end{align}
Correspondingly, the objective function $J(\bar{t})$ can be computed by
\begin{align}
J(\bar{t}) = \sum_{j=0}^{\bar{t}}\frac{\lambda}{\lambda \bar{t}+1} \text{tr}(f^j(\bar{P})) +  \sum_{j=\bar{t}+1}^{\infty} \frac{(1-\lambda)^{j-\bar{t}}\lambda}{\lambda \bar{t}+1} \text{tr}(f^j(\bar{P})).
\end{align}
Also, by closely examining the distribution in (\ref{dis_pi}), we conclude that
\begin{proposition}
	\label{prop1}
	The function $J(\bar{t})$ monotonically increases with the threshold $\bar{t}$, i.e.
	\begin{align}
	J(\bar{t}_1) \leq J(\bar{t}_2) \nonumber
	\end{align}
	for any given integers $0\leq\bar{t}_1\leq\bar{t}_2$. \label{prop_mono}
\end{proposition}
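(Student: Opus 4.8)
The plan is to read $J(\bar t)$ as the expectation of the nondecreasing sequence $\mathrm{tr}(f^{j}(\bar P))$ under the stationary law $\pi(\bar t)$, and then to show that raising the threshold shifts this law toward larger indices in the sense of first-order stochastic dominance. Concretely, let $X(\bar t)$ be the integer-valued random variable with $\mathbb{P}[X(\bar t)=j]=\pi_j(\bar t)$, so that $J(\bar t)=\mathbb{E}[\mathrm{tr}(f^{X(\bar t)}(\bar P))]$. By the total ordering (\ref{p_mono}) the increments $\delta_j \triangleq \mathrm{tr}(f^{j}(\bar P))-\mathrm{tr}(f^{j-1}(\bar P))$ are nonnegative for every $j\ge 1$. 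Since the claim for general $0\le \bar t_1\le \bar t_2$ follows by telescoping $J(\bar t_2)-J(\bar t_1)=\sum_{\bar t=\bar t_1}^{\bar t_2-1}\big(J(\bar t+1)-J(\bar t)\big)$ once the single-step inequality $J(\bar t)\le J(\bar t+1)$ is known, I would reduce the proof to this unit increment.

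Next I would apply Abel (layer-cake) summation to rewrite $J(\bar t)=\mathrm{tr}(\bar P)+\sum_{i=1}^{\infty}\delta_i\,\mathbb{P}[X(\bar t)\ge i]$. Because every $\delta_i\ge 0$, it then suffices to establish the tail comparison $\mathbb{P}[X(\bar t+1)\ge i]\ge \mathbb{P}[X(\bar t)\ge i]$ for each $i\ge 1$; this is exactly first-order stochastic dominance of $X(\bar t+1)$ over $X(\bar t)$, and it immediately delivers $J(\bar t)\le J(\bar t+1)$.

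The remaining work is to verify this tail inequality directly from the explicit distribution (\ref{dis_pi}). Summing the geometric tail yields a closed form that splits into two regimes: for $i\le \bar t$ one gets $\mathbb{P}[X(\bar t)\ge i]=1-\frac{\lambda i}{\lambda \bar t+1}$, while for $i>\bar t$ one gets $\mathbb{P}[X(\bar t)\ge i]=\frac{(1-\lambda)^{\,i-\bar t}}{\lambda \bar t+1}$. Comparing the values at $\bar t$ and $\bar t+1$ then separates into three cases according to where $i$ sits relative to the threshold: both arguments in the flat regime ($\bar t\ge i$), both in the geometric tail ($\bar t\le i-2$), and the boundary case $\bar t=i-1$, where one argument lands in the flat part and the other in the tail.

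The main obstacle is that the normalizing factor $\frac{\lambda}{\lambda \bar t+1}$ itself depends on $\bar t$, so it is not a priori clear that raising the threshold moves mass upward: the extra flat state inserted at level $\bar t+1$ competes against the renormalization of the whole distribution. I expect the genuine content to sit in the boundary case $i=\bar t+1$, where the two formulas differ in form; there the required inequality reduces, after clearing the positive denominators $\lambda \bar t+1$ and $\lambda(\bar t+1)+1$, to an elementary statement of the type $\lambda^2(\bar t+1)\ge 0$, which holds for all $\bar t$ and all $\lambda\in(0,1)$. The other two cases reduce similarly to manifestly true inequalities, closing the single-step comparison and hence the proposition. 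The degenerate value $\lambda=1$ collapses $\pi(\bar t)$ to the uniform law on $\{0,\dots,\bar t\}$, for which $J(\bar t)$ is a running average of the nondecreasing sequence $\mathrm{tr}(f^{j}(\bar P))$ and can be checked directly.
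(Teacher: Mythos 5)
Your argument is correct, but it proves the proposition by a genuinely different mechanism than the paper. You work at the level of tail probabilities: after the layer-cake rewriting $J(\bar{t})=\text{tr}(\bar{P})+\sum_{i\geq 1}\delta_i\,\mathbb{P}[X(\bar{t})\geq i]$ with $\delta_i\geq 0$ from (\ref{p_mono}), you reduce the claim to first-order stochastic dominance of $X(\bar{t}+1)$ over $X(\bar{t})$, which you then verify from the closed-form tails of (\ref{dis_pi}); your case analysis is sound, and indeed both the boundary case $i=\bar{t}+1$ and the geometric case $i\geq\bar{t}+2$ collapse, after cancelling the common factor $(1-\lambda)^{i-\bar{t}-1}$, to the same inequality $\lambda\bar{t}+1\geq(1-\lambda)\big(\lambda(\bar{t}+1)+1\big)$, i.e. $\lambda^2(\bar{t}+1)\geq 0$, while the flat case $i\leq\bar{t}$ is immediate. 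The paper never computes tails; it instead compares the probability mass functions pointwise, observing the single-crossing property that $\pi_i(\bar{t}_1)>\pi_i(\bar{t}_1+1)$ for $i\leq\bar{t}_1$ and $\pi_i(\bar{t}_1)<\pi_i(\bar{t}_1+1)$ for $i\geq\bar{t}_1+1$, then splits $J(\bar{t}_1+1)-J(\bar{t}_1)$ at the threshold, bounds the traces in the upper block below by $\text{tr}f^{\bar{t}_1+1}(\bar{P})$ and in the lower block above by $\text{tr}f^{\bar{t}_1}(\bar{P})$, and uses conservation of total probability mass to collapse the difference to $\sum_{i=0}^{\bar{t}_1}\big(\pi_i(\bar{t}_1)-\pi_i(\bar{t}_1+1)\big)\big(\text{tr}f^{\bar{t}_1+1}(\bar{P})-\text{tr}f^{\bar{t}_1}(\bar{P})\big)\geq 0$. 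Both proofs share the single-step reduction and hinge on the same underlying dominance phenomenon, but the decompositions differ: yours makes the stochastic-dominance structure explicit and yields reusable closed-form tail expressions (which could, for instance, quantify the gap $J(\bar{t}+1)-J(\bar{t})$), at the cost of a three-case verification, whereas the paper's single-crossing bounding trick is shorter and avoids evaluating any tail sum. One small remark: your flat-regime formula $\mathbb{P}[X(\bar{t})\geq i]=1-\lambda i/(\lambda\bar{t}+1)$ actually remains valid at $i=\bar{t}+1$, where it coincides with the geometric expression, so the boundary case is consistent with, rather than exceptional to, your two closed forms.
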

A brief proof is provided in Appendix \ref{pf_prop1}. 

In view of the monotonicity of $J(\bar{t})$, minimizing $J(\bar{t})$ is equivalent to minimizing $\bar{t}$. Therefore, Problem 1 can be simplified as

\textbf{\underline{Problem 2:}}
\begin{align}
\min_{\bar{t}} ~& ~\bar{t}\nonumber\\
\text{s.t.} ~~& ~\lim_{K\rightarrow\infty}\sum_{k=0}^K\frac{1}{K}\text{tr}\mathbb{E}[P_{k}^e ] \geq \underline{b},\label{infinite}
\end{align}
Whilst in \cite{leong2015sensor}, we developed an expression for the stationary estimation error covariance at the legitimate estimators, expressions for $P_k^e$ were not obtained. To be able to enforce the security constraint  in (\ref{infinite}), we will next show: 
\begin{itemize}
	\item[1)] how to evaluate the expected error covariance at the eavesdropper;
	\item[2)] how to efficiently solve Problem 2.
\end{itemize}

\section{Distribution of expected error covariance at the eavesdropper}
In this section, we employ a Markov chain to model the state evolution when the transmission is scheduled by the threshold policy in (\ref{policy}). Then, we show that the stationary distribution of the expected error covariance at the eavesdropper can be calculated in a recursive manner, based on which, properties on convergence and monotonicity are further provided. 

As aforementioned, the key challenge lies in the tricky correlation between the eavesdropper's error covariance and the time instances at which a transmission occurs. To tackle this problem, we define an artificial state 
\begin{align}s_k=(i,j), \nonumber
\end{align}
associating with the state that $P_k=f^{i}(\bar{P})$ and $P_k^e=f^j(\bar{P})$. Here $i,j\in\mathbb{N}$, with $\mathbb{N}$ denoting the set of non-negative integers.
 Correspondingly, the threshold policy in (\ref{policy}) can be rewritten as
\begin{align}
\nu_k(i)  =  \left\{\begin{array}{cl} 1, & 
i \geq \bar{t} \\ 0, & i< \bar{t}
\end{array}  \right.. \label{policy2}
\end{align}
According to (\ref{policy2}), and taking into account packet dropouts given in (\ref{dropout}), the state evolution must follow the equations in (\ref{transition1}) and (\ref{transition2}),
\begin{align}
&\begin{array}{ll}
 i \geq \bar{t}: & \mathbb{P}(s_{k+1}=(0,0)\mid s_k=(i,j)) = \mathbb{P}(\gamma_k=1,\gamma_k^e=1) =\lambda\lambda_e\\
 & \mathbb{P}(s_{k+1}={(i+1,0)}\mid s_k={(i,j)}) = \mathbb{P}(\gamma_k=0,\gamma_k^e=1) = (1-\lambda)\lambda_e\\
 & \mathbb{P}(s_{k+1}={(0,j+1)}\mid s_k={(i,j)}) =\mathbb{P}(\gamma_k=1,\gamma_k^e=0)= \lambda(1-\lambda_e) \\
 & \mathbb{P}(s_{k+1}={(i+1,j+1)}\mid s_k={(i,j)}) = \mathbb{P}(\gamma_k=0,\gamma_k^e=0)= (1-\lambda)(1-\lambda_e)
\end{array} \label{transition1} \\
& \begin{array}{cc}
i<\bar{t}: & \mathbb{P}(s_{k+1}={(i+1,j+1)}\mid s_k={(i,j)}) = 1
\end{array} \label{transition2}
\end{align} 
based upon which, a time-homogeneous recurrent Markov chain is established in Lemma \ref{markov_chain} to model  the state evolution. 
\begin{lemma}
	\label{markov_chain}
	Define an integer set $\Omega_i$ as
	$\Omega_i = [0,i-\bar{t}-1]\cup [i,\infty)$.
	The states $(i,j)$ with $i\in[0,\infty)$ and $j\in\Omega_i$ compose an irreducible, aperiodic and recurrent Markov chain,
	whose probability transition matrix is given as
	\begin{align}
	\mathbf{P}=\left(\begin{array}{cccccccc}
	\mathbf{0} & \mathbf{I} & 	\mathbf{0} &	\mathbf{0} & \dots & \dots & 	\mathbf{0} & 	\mathbf{0}\\
	\mathbf{0} & \mathbf{0} & 	\ddots &	\mathbf{0} & \dots & \dots & 	\mathbf{0} & 	\mathbf{0}\\
	\mathbf{0} & \mathbf{0} & 	\dots &	\mathbf{I} & \mathbf{0} & \dots & 	\mathbf{0} & 	\mathbf{0}\\
	\mathbf{T}_{\bar{t},0} & \mathbf{0} & 	\dots &	\mathbf{0} & \mathbf{T}_c &  	\mathbf{0} & \dots  & 	\mathbf{0}\\
	\mathbf{T}_{\bar{t}+1,0} & \mathbf{0} & 	\dots &	\mathbf{0} &  	\mathbf{0} &\mathbf{T}_c & \dots  & 	\mathbf{0}\\
	\vdots & \vdots & 	\vdots & \vdots & 	\vdots & \vdots & 	\vdots & \vdots 
	\end{array}\right).\nonumber
	\end{align}
	Here matrices $\mathbf{T}_{\bar{t}+i,0}$ and $\mathbf{T}_c$ are defined as
	\begin{align}
	& \mathbf{T}_{\bar{t}+i,0}=  \left(
	\begin{array}{rccc}
	\begin{array}{r}
	\\[-33mm]\ldelim\{ {4}{-2mm}[$i$] \\  \end{array}  & \begin{array}{c}	\begin{array}{ccccc}
	\lambda\lambda_e & 	\lambda(1-\lambda_e) & 0  & \dots & 0\\
	\lambda\lambda_e &  0 &	\lambda(1-\lambda_e)  & \dots & 0\\
	\vdots & \vdots & \vdots & \ddots & \vdots\\
	\lambda\lambda_e &  0 & \dots & \dots & \lambda(1-\lambda_e)
	\\ \hdashline[2pt/2pt]
	\lambda\lambda_e & 	0 & 0  & \dots & 0\\
	\lambda\lambda_e &  0 &	0  & \dots & 0\\
	\vdots & \vdots & \vdots & \ddots & \vdots
	\end{array} \end{array}  & \overbrace{\text{\huge 0}}^{\bar{t}} & \begin{array}{c}\\\text{\huge 0} \\ [10.5mm]\begin{array}{ccc}
	\hdashline[2pt/2pt] \lambda(1-\lambda_e) & 0 & \dots \\
	0 &  \lambda(1-\lambda_e) & \dots\\
	\vdots & \vdots & \ddots
	\end{array} \end{array}
	\end{array}
	\right) \label{pt1}
	\end{align}
	and 
	\begin{align}
	\mathbf{T}_c = \left(	\begin{array}{ccccc}
	(1-\lambda)\lambda_e & 	(1-\lambda)(1-\lambda_e) & 0  & \dots & 0\\
	(1-\lambda)\lambda_e &  0 &	(1-\lambda)(1-\lambda_e)  & \dots & 0\\
	\vdots & \vdots & \vdots & \ddots & \vdots
	\end{array} \right). \label{pt2}
	\end{align}
\end{lemma}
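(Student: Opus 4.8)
The plan is to verify the three qualitative properties (irreducibility, aperiodicity, recurrence) on the state set $\{(i,j):i\geq 0,\ j\in\Omega_i\}$ and then read the matrix $\mathbf{P}$ off the transition rules (\ref{transition1})--(\ref{transition2}). The conceptual heart is justifying that $\Omega_i$ is exactly the right index set, i.e. that the band $j\in[i-\bar{t},\,i-1]$ never occurs. Recalling $i=n_k$ and $j=n_k^e$, I would give a timing argument: if the eavesdropper's last reception is more recent than the estimator's, so $j<i$, then that reception happened on a transition out of a state with estimator counter $\geq\bar{t}$ (a transmission is needed, and by (\ref{policy2}) this requires $i\geq\bar{t}$); since the estimator did not reset between then and the present (as $j<i$), that counter equals $i-j-1$, whence $i-j-1\geq\bar{t}$, i.e. $j\leq i-\bar{t}-1$. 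Thus $j<i$ forces $j\in[0,i-\bar{t}-1]$, which together with $j\geq i$ is exactly $\Omega_i$. For rigour I would then replace the heuristic by an induction: starting at $(0,0)$, check that each of the five moves in (\ref{transition1})--(\ref{transition2}) carries $j\in\Omega_i$ to $j'\in\Omega_{i'}$, a short case split on $i<\bar{t}$ versus $i\geq\bar{t}$ and on the four reception patterns, proving that no state outside the claimed set is ever reached.

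Next I would establish that all claimed states communicate. From any $(i,j)$ the chain reaches $(0,0)$ with positive probability by taking the forced increments until $i\geq\bar{t}$ (at most $\bar{t}$ steps) and then the ``both receive'' branch, of probability $\lambda\lambda_e>0$. Conversely, from $(0,0)$ every state with $j\in\Omega_i$ is reachable by an explicit admissible reception sequence: climb the diagonal $(0,0)\to(1,1)\to\cdots$ up to a transmission state, then select, step by step, the reception pattern resetting the eavesdropper and/or the estimator at the prescribed times, each of the four patterns having positive probability once $i\geq\bar{t}$; the range in which this construction is feasible is precisely $\Omega_i$. Hence the chain is irreducible. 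Aperiodicity follows from two positive-probability return loops at $(0,0)$, namely $(0,0)\to\cdots\to(\bar{t},\bar{t})\to(0,0)$ of length $\bar{t}+1$ and the same prefix continued through $(\bar{t}+1,\bar{t}+1)\to(0,0)$ of length $\bar{t}+2$; since $\gcd(\bar{t}+1,\bar{t}+2)=1$ the period of $(0,0)$, and hence of every state, is $1$.

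For recurrence I would prove the stronger fact that $(0,0)$ is positive recurrent by bounding its expected return time. The estimator counter $i$ evolves autonomously as a renewal process — a deterministic rise over $\bar{t}$ steps followed by a geometric($\lambda$) number of transmission attempts before it resets — so its cycles are i.i.d.\ with mean $\bar{t}+1/\lambda$. At each reset of $i$ the independent eavesdropper indicator equals $1$ with probability $\lambda_e$, and the chain enters $(0,0)$ exactly when an $i$-reset coincides with an eavesdropper reception, so the number of $i$-cycles required is geometric($\lambda_e$). Wald's identity then gives mean return time $\tfrac{1}{\lambda_e}\!\left(\bar{t}+\tfrac{1}{\lambda}\right)<\infty$, which yields positive recurrence and in particular recurrence of the irreducible chain.

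Finally the explicit $\mathbf{P}$ is bookkeeping. Grouping the states into blocks $B_i=\{(i,j):j\in\Omega_i\}$ ordered by $j$, the deterministic move (\ref{transition2}) preserves the index $j-i$ and sends $B_i$ to $B_{i+1}$ through an identity block for $i<\bar{t}$; for $i\geq\bar{t}$ the two ``estimator receives'' branches (total probability $\lambda$) map into $B_0$ and assemble into $\mathbf{T}_{\bar{t}+i,0}$ of (\ref{pt1}), while the two ``estimator misses'' branches (total probability $1-\lambda$) map into $B_{i+1}$ and assemble into $\mathbf{T}_c$ of (\ref{pt2}); matching entries against the chosen ordering reproduces the stated matrix. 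I expect the main obstacle to be the first step, namely pinning down $\Omega_i$ as the exact state space and proving closedness cleanly, together with making positive recurrence rigorous on this infinite chain: because the eavesdropper coordinate $j$ is unbounded, a naive return-time estimate could diverge, and it is the renewal/Wald decomposition — return to $(0,0)$ needs only a coincident reset rather than a gradual decrease of $j$ — that keeps the expected return time finite.
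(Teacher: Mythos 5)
Your proposal is correct, and its structural core --- excluding the band $j\in[\max(i-\bar{t},0),\,i-1]$, grouping the surviving states into blocks $(i,\cdot)$, and reading the block transition matrix off (\ref{transition1})--(\ref{transition2}) --- is the same route the paper takes. The difference lies in how much of the qualitative claim gets an actual argument. The paper's proof only establishes the band exclusion (by observing that the $k$-step transition probability from $(0,0)$ into the band is zero) and then asserts that $(0,0)$ is ``obviously recurrent''; irreducibility and aperiodicity are never argued. You fill exactly these gaps: the timing argument forcing $j<i$ down to $j\le i-\bar{t}-1$ (equivalent to, and arguably more transparent than, the paper's unreachability claim, since it explains \emph{why} the band is forbidden by the policy (\ref{policy2})), explicit two-way reachability for irreducibility, coprime return loops of lengths $\bar{t}+1$ and $\bar{t}+2$ for aperiodicity, and the renewal/Wald decomposition giving mean return time $\tfrac{1}{\lambda_e}\bigl(\bar{t}+\tfrac{1}{\lambda}\bigr)<\infty$ to $(0,0)$. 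That last argument buys strictly more than the lemma states: \emph{positive} recurrence rather than mere recurrence, which is in fact what the paper implicitly needs later when it posits a normalized stationary vector solving (\ref{mc1})--(\ref{mc2}) (an irreducible null-recurrent chain admits no stationary probability distribution). So your proof is a strict strengthening along the paper's own lines, while the paper's version is leaner because it defers the quantitative content to the explicit construction of $\Phi$ in Lemma \ref{lemma2}, whose summability retroactively certifies positive recurrence.
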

\begin{proof}	
	First, it can be checked from (\ref{transition1}) and (\ref{transition2}) that states $(i,j)$ with $j\in[\max{(i-\bar{t},0)},i]$ are not recurrent, since the state $(0,0)$ is obviously recurrent and for a state $s_0=(0,0)$, the $k$-step transition probability $\mathbb{P}(s_k=(i,j),j\in[i-\bar{t},i-1]\mid s_0=(0,0))=0$ for any positive integer $k$. Define $\mathcal{S}_k=(i,\cdot)$ as a collection of states $(i,j)$ with $j\in\Omega_i$, ruling out all the non-recurrent states. Then, we have
	\begin{align}
	(i,\cdot)=\left\{\begin{array}{lc}
	\{(i,i),~(i,i+1),~(i,i+2),\dots\} & i<\bar{t}+1. \\
	\{(i,0),~(i,1),~(i,i-\bar{t}-1),(i,i),(i,i+1),\dots\} & i\geq \bar{t}+1.
	\end{array}\right.\label{range_j}
	\end{align}
Given that $\mathcal{S}_k=(i,\cdot)$, denote $s_{k}=(i,\cdot)_j$ as the $j$-th element in $\mathcal{S}_{k}$. As shown in (\ref{range_j}), due to the discontinuity of $j$, the state $(i,\cdot)_j$ is generally different from the state $(i,j)$. Following Eq. (\ref{transition1}) and Eq. (\ref{transition2}), it can be derived that 
if $i<\bar{t}$ \begin{align}
	\mathbb{P}(s_{k+1}=(i+1,\cdot)_j\mid s_{k}=(i,\cdot)_j)=1,\label{tran1}
	\end{align} and if $i\geq\bar{t}$,
	\begin{equation}
	\left\{\begin{array}{ll}
	\mathbb{P}(s_{k+1}=(i+1,\cdot)_j\mid s_k=(i,\cdot)_j)= \mathbb{P}(\gamma_k=0,\gamma_k^e=0)= (1-\lambda)(1-\lambda_e)  &  \\
	\mathbb{P}(s_{k+1}=(i+1,\cdot)_0\mid s_k=(i,\cdot)_j)= \mathbb{P}(\gamma_k=0,\gamma_k^e=1)= (1-\lambda)\lambda_e  &  \\
	\mathbb{P}(s_{k+1}=(i,\cdot)_0\mid s_k=(i,\cdot)_j)= \mathbb{P}(\gamma_k=1,\gamma_k^e=1)= \lambda\lambda_e  & \\
	\mathbb{P}(s_{k+1}=(0,\cdot)_{j+1}\mid s_k=(i,\cdot)_j)= \mathbb{P}(\gamma_k=1,\gamma_k^e=0) =\lambda(1-\lambda_e)& j< i\\
	\mathbb{P}\left(s_{k+1}=(0,\cdot)_{j+\bar{t}+1}\mid s_k=(i,\cdot)_j\right)= \mathbb{P}\left(\gamma_k=1,\gamma_k^e=0\right) =\lambda(1-\lambda_e)& j\geq i
	\end{array} \right. . \label{tran2}
	\end{equation}
According to (\ref{tran2}), a Markov chain revealing the state evolution is established in Fig. 2 with transition probability matrix defined in (\ref{pt1}) and (\ref{pt2}). $\hfill\square$
	
	\begin{figure}[t]
		\centering
		\label{markov}
		\begin{tikzpicture}[->, >=stealth', auto, semithick, node distance=1.7cm]
		\tikzstyle{every state}=[fill=white,draw=black,thick,text=black,scale=1]
		\node[state,minimum size=1.1cm]    (A)                     {\footnotesize $(0,\cdot)$};
		\node[state,minimum size=1.1cm]    (B)[right of=A]   {\footnotesize$(1,\cdot)$};
		\node[state,minimum size=1.1cm]    (C)[right of=B]   {\footnotesize$(2,\cdot)$};
		\node[draw=none,right=of C]           (C-D) {$\dots$};
		\node[state,minimum size=1.1cm]    (D)[right of={C-D}]   {\footnotesize$(\bar{t},\cdot)$};
		\node[state,minimum size=1.1cm]    (E)[right of=D]   {\footnotesize $(\bar{t}+1,\cdot)$};
		\node[draw=none,right=of E]           (F) {$\dots$};
		\path
		(A) edge   node{$\mathbf{I}$}      (B)
		(B) edge                node{$\mathbf{I}$}           (C)
		(C) edge                node{$\mathbf{I}$}           ({C-D})
		({C-D}) edge    node{$\mathbf{I}$}     (D)
		(D) edge node{$\mathbf{T}_c$} (E)
		edge[bend left,above]  node{$\mathbf{T}_{\bar{t},0}$} (A)
		(E) edge node{$\mathbf{T}_c$} (F)
		edge[bend left]  node{$\mathbf{T}_{\bar{t}+1,0}$} (A);
		\end{tikzpicture}
		\caption{Markov chain}
	\end{figure}
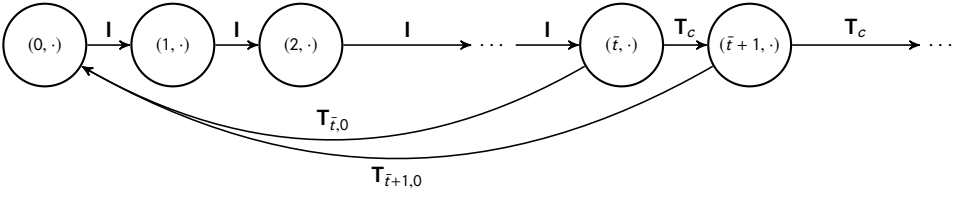
\end{proof}
Next, we will endeavor to derive the stationary distribution of the states in this Markov chain to obtain an explicit expression of the average expected error covariance. In particular, denote the stationary probability corresponding to the state $(i,j)$ as $\phi_{i,j}(\bar{t})$, which intrinsically represents the long run proportion of time that the chain stays at the state $(i,j)$, i.e.
\begin{align}
\phi_{i,j}(\bar{t})=\lim_{m\rightarrow\infty}\frac{\sum_{k=0}^{m} \mathcal{I}\left(s_k=(i,j)\right)}{m} \quad\text{w.p.1}\label{long}
\end{align} 
with $\mathcal{I}$ defined as an indicator function. Thereupon, the average of the expected error covariance at the eavesdropper can be computed as
\begin{align}
\lim_{K\rightarrow\infty}\frac{1}{K}\sum_{k=1}^K \text{tr} \mathbb{E}[P_k^e]=\sum_{j=0}^{\infty}\sum_{i=0}^{\infty} \phi_{i,j}(\bar{t}) \text{tr} (f^j(\bar{P})). \label{average}
\end{align} 
Note from (\ref{long}) that transient states should have zero stationary probability. 

Define a set $\Gamma_j$ as 
\begin{align}\Gamma_j=[0,j]\cup [\bar{t}+j+1,\infty).\nonumber
\end{align}
It can be checked from the Markov chain that for a given $j$, if the state $(i,j)$ is recurrent, then  $i\in\Gamma_j$. If we define 
\begin{align}
\omega_j(\bar{t}) = \sum_{i\in\Gamma_j} \phi_{i,j}(\bar{t}),\nonumber
\end{align}
the average of expected error covariance in (\ref{average}) can be simplified as
\begin{align}
\lim_{K\rightarrow\infty}\frac{1}{K}\sum_{k=1}^K \text{tr} \mathbb{E}[P_k^e]=\sum_{j=0}^{\infty}\omega_j(\bar{t}) \text{tr} (f^j(\bar{P})). \label{infinite_sum}
\end{align} 
If $\omega_j(\bar{t})$ is known for a given $\bar{t}$, the error covariance at the eavesdropper can be computed. In the following, we propose a recursive approach to calculate $\omega_j$. For brevity, $\bar{t}$ is omitted whenever no ambiguity arises. 

First, we look at the computation of $\phi_{i,j}$. Define a vector with infinite dimension as
\begin{align}
\Phi=\left[\begin{array}{cccc}
\Phi_0 & \Phi_1 & \Phi_2 & \dots
\end{array}\right].\nonumber
\end{align}
Herein, $\Phi_i$ is defined as
\begin{align}
\Phi_i =\left[\begin{array}{ccccccc}
\phi_{i,0} & \phi_{i,1} & \dots & \phi_{i,i-\bar{t}-1} & \phi_{i,i} & \phi_{i,i+1} & \dots
\end{array}\right]
\end{align}
which collects all $\phi_{i,j}$ with $j\in\Omega_i$. Essentially, the vector $\Phi$ is an assembly of the stationary probabilities associating with recurrent states. Thus, it satisfies that 
\begin{align}
\Phi = & \Phi \mathbf{P} \label{mc1}.
\end{align}	
As per the way that the state $(i,j)$ is defined, we have
 \begin{align}
\sum_{j\in\Omega_i} \phi_{ij} = &\pi_i, ~\forall i\in\mathbb{N},\label{mc3}\\
\sum_{i,j\in\Omega_i} \phi_{ij}  = & 1.\label{mc2}
\end{align}
By combining Eq. (\ref{mc1})-Eq. (\ref{mc3}), we derive a recurrence relation of $\phi_{0,j}$ for all $j\in\Omega_0$, which is given in Lemma \ref{lemma2}. 
\begin{lemma}
	\label{lemma2}
	Define $\alpha = (1-\lambda)(1-\lambda_e)$ and $\beta = \lambda(1-\lambda_e)$. $\phi_{0,j}$ satisfies that
	\begin{equation}
	\begin{aligned}
	\phi_{0,0} = & \frac{\lambda_e\lambda}{\lambda \bar{t} + 1} \\
	\phi_{0,j+1} = & \alpha \phi_{0,j} \quad j< \bar{t}  \\
	\phi_{0,j+1} = & \alpha \phi_{0,j} +\beta \phi_{0,j-\bar{t}} \quad j\geq \bar{t} 
	\end{aligned}  \label{lemma23}
	\end{equation}
\end{lemma}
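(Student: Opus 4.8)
The plan is to read the recurrence off the global balance equations $\Phi=\Phi\mathbf{P}$ in (\ref{mc1}), exploiting the fact that the chain splits into a deterministic ``drift'' region $i<\bar t$, where $(i,j)\to(i+1,j+1)$ with probability one, and a stochastic ``transmit'' region $i\geq\bar t$, where the four-way branching of (\ref{transition1}) applies. The central auxiliary quantity will be the aggregated stationary mass of the transmitting states at eavesdropper level $j$,
\begin{align}
\sigma_j \triangleq \sum_{i\geq\bar t}\phi_{i,j},\nonumber
\end{align}
and the whole argument reduces to showing that $\sigma_j$ obeys a first-order recursion with a $\bar t$-step feedback term, which then transfers verbatim to $\phi_{0,j}$ through the boundary relation $\phi_{0,j+1}=\beta\,\sigma_j$.

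First I would identify the incoming transitions of each state $(0,j)$. Since $i=0$ can only be reached by a successful legitimate reception, its predecessors are exactly the transmitting states $i'\geq\bar t$ with $\gamma=1$; whether the eavesdropper also resets is governed by $\gamma_k^e$. This gives $\phi_{0,0}=\lambda\lambda_e\,\sigma_\ast$ with $\sigma_\ast=\sum_{i\geq\bar t}\sum_j\phi_{i,j}$, and, for $j\geq 1$, $\phi_{0,j}=\beta\,\sigma_{j-1}$ (the case $\gamma^e=0$, so $j=j'+1$). Invoking the marginal identity (\ref{mc3}) with $j$ ranging over $\Omega_i$ together with the explicit $\pi_i$ of (\ref{dis_pi}), the geometric tail sums to $\sum_{i\geq\bar t}\pi_i=\tfrac{1}{\lambda\bar t+1}$, which at once yields $\phi_{0,0}=\tfrac{\lambda\lambda_e}{\lambda\bar t+1}$, the first line of (\ref{lemma23}).

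Next I would establish two transport relations from (\ref{transition1})--(\ref{transition2}). In the drift region the only recurrent predecessor of $(i,j)$ with $1\leq i\leq\bar t$ is $(i-1,j-1)$, so $\phi_{i,j}=\phi_{i-1,j-1}$, and iterating gives $\phi_{i,j}=\phi_{0,j-i}$; in particular $\phi_{\bar t,j}=\phi_{0,j-\bar t}$ for $j\geq\bar t$, whereas $(\bar t,j)$ is transient for $1\leq j<\bar t$ and hence $\phi_{\bar t,j}=0$ there. In the transmit region, for $i\geq\bar t+1$ and $j\geq 1$ the only branch landing in $(i,j)$ is $\gamma=0,\gamma^e=0$, giving $\phi_{i,j}=\alpha\,\phi_{i-1,j-1}$. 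Summing this over $i\geq\bar t+1$ and adjoining the $i=\bar t$ term yields the key recursion
\begin{align}
\sigma_j=\phi_{\bar t,j}+\alpha\,\sigma_{j-1},\qquad j\geq 1.\nonumber
\end{align}
Substituting $\phi_{0,j+1}=\beta\,\sigma_j$ and the value of $\phi_{\bar t,j}$ just found, the regime $j<\bar t$ collapses to $\phi_{0,j+1}=\alpha\,\phi_{0,j}$ and the regime $j\geq\bar t$ to $\phi_{0,j+1}=\alpha\,\phi_{0,j}+\beta\,\phi_{0,j-\bar t}$, exactly (\ref{lemma23}); the base step $\phi_{0,1}=\alpha\,\phi_{0,0}$ is checked separately from the $j=0$ column, where $\phi_{\bar t,0}=0$ and $\phi_{i,0}=(1-\lambda)\lambda_e\,\pi_{i-1}$ for $i\geq\bar t+1$ force $\sigma_0=\tfrac{1-\lambda}{\lambda}\phi_{0,0}$.

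The part needing the most care, and which I expect to be the main obstacle, is the bookkeeping of recurrent versus transient states near the boundary $i=\bar t$: the marginal constraint (\ref{mc3}) must be applied with the index set $\Omega_i$, the vanishing of $\phi_{\bar t,j}$ for $1\leq j<\bar t$ is precisely what separates the two regimes of the recurrence, and the interchange of the infinite sum over $i$ with the shift in $j$ in forming $\sigma_j$ must be justified by the nonnegativity and summability of $\Phi$. Once these boundary matters are settled, the remainder is a routine telescoping of geometric sums.
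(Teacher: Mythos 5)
Your proposal is correct and follows essentially the same route as the paper: both extract the inflow relations $\phi_{0,0}=\lambda\lambda_e\sum_{i\geq\bar t}\pi_i$ and $\phi_{0,j+1}=\beta\sum_{i\geq\bar t}\phi_{i,j}$ from the global balance equation $\Phi=\Phi\mathbf{P}$, use the drift/transmit transport relations $\phi_{i+1,j+1}=\phi_{i,j}$ (for $i<\bar t$) and $\phi_{i+1,j+1}=\alpha\phi_{i,j}$ (for $i\geq\bar t$) together with the marginal $\pi_i$ from (\ref{dis_pi}), and split off the $\phi_{\bar t,j}=\phi_{0,j-\bar t}$ term to produce the delayed feedback; your $\sigma_j$ is just a name for the sum the paper manipulates inline. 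If anything, your explicit treatment of the $j=0$ base case (where the transmit transport relation does not apply and one must instead use $\phi_{i,0}=(1-\lambda)\lambda_e\pi_{i-1}$) is more careful than the paper's, which glosses over this step.
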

A detailed proof is provided in Appendix \ref{pfLemma2}. Note that if we regard $\phi_{0,j}$ as a state of a dynamical system, the recursion in (\ref{lemma23}) constitutes a scalar linear time invariant system with a single constant time delay. Given that $\alpha+\beta<1$, we show in Proposition \ref{prop3} that $\phi_{0,j}$ exponentially converges to $0$ by employing Lyapunov-based methods for linear systems. 

\begin{proposition}
Define $\gamma = (1-\lambda_e)^{\frac{1}{2(\bar{t}+1)}}$. Then
	\begin{itemize}
		\item [(\RNum{1})] for each integer $j> \bar{t}$, $\phi_{0,j}$ satisfies that  \begin{align}\phi_{0,j}\leq \gamma^{j-\bar{t}}\phi_{0,0}; \label{phi_bounded}\end{align} 
		\item [(\RNum{2})] $\phi_{0,j}$ converges to $0$ exponentially as $j$ goes to infinity, i.e. $
		\lim_{j\rightarrow\infty} \phi_{0,j} = 0.
		$
		\item [(\RNum{3})] the sum $\sum_{j=0}^{n}\phi_{0,j}$ monotonically decreases with the threshold $\bar{t}$ for any $n\geq0$, i.e. 
		\begin{align}
		\sum_{j=0}^{n}\phi_{0,j}(\bar{t}_1) \geq \sum_{j=0}^{n}\phi_{0,j}(\bar{t}_2) \label{phi_mono}
		\end{align} 
		with $\bar{t}_1\leq \bar{t}_2$,
	\end{itemize}
	\label{prop3}
\end{proposition}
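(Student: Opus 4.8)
The plan is to treat part~(I) as the heart of the proposition, obtain part~(II) as an immediate corollary, and handle the parameter-monotonicity claim~(III) by a separate induction. For~(I), I would argue by strong induction on $j$ using the scalar delay recursion of Lemma~\ref{lemma2}. The natural inductive hypothesis is the claimed bound itself, $\phi_{0,j}\le\gamma^{\,j-\bar{t}}\phi_{0,0}$, asserted for all indices rather than only $j>\bar{t}$. The base cases $0\le j\le\bar{t}$ are free: there $\phi_{0,j}=\alpha^{j}\phi_{0,0}\le\phi_{0,0}$, while $\gamma^{\,j-\bar{t}}\ge1$ because $\gamma\le1$ and the exponent is nonpositive, so the bound holds with room to spare.

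For the inductive step with $j\ge\bar{t}$, I would substitute the hypotheses for $\phi_{0,j}$ and $\phi_{0,j-\bar{t}}$ into $\phi_{0,j+1}=\alpha\phi_{0,j}+\beta\phi_{0,j-\bar{t}}$, factor out $\gamma^{\,j-2\bar{t}}\phi_{0,0}$, and reduce the entire claim to the single scalar inequality $\alpha\gamma^{\bar{t}}+\beta\le\gamma^{\bar{t}+1}$. This is where the specific exponent in $\gamma=(1-\lambda_e)^{\frac{1}{2(\bar{t}+1)}}$ pays off: since $\alpha+\beta=1-\lambda_e$ one has $\gamma^{\bar{t}+1}=\sqrt{\alpha+\beta}$, and then $\alpha\gamma^{\bar{t}}+\beta\le\alpha+\beta\le\sqrt{\alpha+\beta}=\gamma^{\bar{t}+1}$, using only $\gamma\le1$ and $s\le\sqrt{s}$ for $s\in[0,1]$. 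Thus the exponential envelope is in fact generous, and the calculation is routine rather than delicate. Part~(II) then follows at once: because $\alpha+\beta<1$ forces $\gamma<1$, the right-hand side $\gamma^{\,j-\bar{t}}\phi_{0,0}$ tends to zero geometrically, giving $\lim_{j\to\infty}\phi_{0,j}=0$ at an exponential rate.

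For part~(III) it suffices to treat consecutive thresholds $\bar{t}$ and $\bar{t}+1$ and then chain the inequalities. First I would sum the recursion of Lemma~\ref{lemma2} to obtain, for the partial sums $G_n(\bar{t})=\sum_{j=0}^{n}\phi_{0,j}(\bar{t})$, the relation $G_n(\bar{t})=\phi_{0,0}(\bar{t})+\alpha G_{n-1}(\bar{t})+\beta G_{n-1-\bar{t}}(\bar{t})$, valid for all $n\ge1$ with the convention $G_m=0$ for $m<0$. Writing $\tilde G_n=G_n(\bar{t}+1)$ and $E_n=G_n(\bar{t})-\tilde G_n$, the aim is to show $E_n\ge0$ for every $n$, which yields the claim since $\phi_{0,0}(\bar{t})=\lambda\lambda_e/(\lambda\bar{t}+1)$ is itself decreasing in $\bar{t}$. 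The main obstacle is that the delay index in the $\beta$-term differs between the two chains ($G_{n-1-\bar{t}}$ versus $\tilde G_{n-2-\bar{t}}$), so the difference does not telescope directly. I would resolve this by the decomposition
\begin{align}
G_{n-1-\bar{t}}(\bar{t})-\tilde G_{n-2-\bar{t}}=E_{n-1-\bar{t}}+\tilde\phi_{0,\,n-1-\bar{t}},\nonumber
\end{align}
after which the recurrence for the difference reads $E_n=\bigl(\phi_{0,0}(\bar{t})-\phi_{0,0}(\bar{t}+1)\bigr)+\alpha E_{n-1}+\beta E_{n-1-\bar{t}}+\beta\tilde\phi_{0,\,n-1-\bar{t}}$. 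Every term on the right is nonnegative---the first because $\phi_{0,0}$ decreases in $\bar{t}$, the last because stationary probabilities are nonnegative, and the middle two by the induction hypothesis---so $E_n\ge0$ follows by strong induction (with $E_m=0$ for $m<0$). I expect this delay-mismatch bookkeeping to be the only genuinely subtle point; once the decomposition above is in hand, the positivity argument is mechanical.
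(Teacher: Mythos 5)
Your proof is correct, but it takes a genuinely different route from the paper's on the key item (\RNum{1}). The paper does not prove the bound (\ref{phi_bounded}) by hand: it observes that the recursion (\ref{lemma23}) is a scalar linear system with constant delay $\bar{t}+1$ and invokes Theorem 1 of \cite{diblik2016exponential} as a black box, from which the envelope $\phi_{0,j}\leq \gamma^{j-\bar{t}}\max_{l\in[0,\bar{t}]}\phi_{0,l}=\gamma^{j-\bar{t}}\phi_{0,0}$ and item (\RNum{2}) follow immediately. Your strong induction replaces that citation with the single scalar inequality $\alpha\gamma^{\bar{t}}+\beta\leq\gamma^{\bar{t}+1}$, verified via $\gamma^{\bar{t}+1}=\sqrt{\alpha+\beta}$ and $s\leq\sqrt{s}$ on $[0,1]$; this is self-contained, elementary, and makes transparent why the specific exponent $\frac{1}{2(\bar{t}+1)}$ in $\gamma$ suffices (indeed with slack), at the cost of being longer than the one-line appeal to delay-system theory. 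Both arguments share the implicit standing assumption $\lambda_e>0$ (equivalently $\alpha+\beta<1$), without which $\gamma=1$ and (\RNum{2}) fails. For item (\RNum{3}) your approach is essentially the paper's --- sum the recursion into a partial-sum recursion and induct --- but your bookkeeping is tighter: the paper's inductive step compares the two chains without spelling out how to reconcile their different delays ($\bar{t}_1$ versus $\bar{t}_2$) and different switching indices, whereas your reduction to consecutive thresholds and the decomposition $G_{n-1-\bar{t}}(\bar{t})-\tilde G_{n-2-\bar{t}}=E_{n-1-\bar{t}}+\tilde\phi_{0,n-1-\bar{t}}$ isolates exactly the nonnegative surplus term $\beta\tilde\phi_{0,n-1-\bar{t}}$ that makes the induction close, so your write-up actually repairs a step the paper leaves implicit.
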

Please refer to Appendix \ref{pfprop3} for the proof. 

Based on the results in Lemma \ref{lemma2} and Proposition \ref{prop3}, a recurrence relation on $\omega_j$ can be obtained. Similar to $\phi_{0,j}$, $\omega_{j}$ is shown to converge to $0$ superlinearly and the average of the expected error covariance increases monotonically with the threshold $\bar{t}$. 
\begin{theorem}
	\label{thm5}
	The stationary distribution of the expected error covariance at the eavesdropper can be computed recursively as
	\begin{align}
	\begin{array}{ll}
	\omega_0 =  \frac{\lambda_e}{\lambda\bar{t}+1} & \\
	\omega_{j} =  \alpha \omega_{j-1}  + \phi_{0,0} & j\in[1,\bar{t}]
	\\
	\omega_{j}=\alpha\omega_{j-1}+\phi_{0,0}-\lambda_e\sum_{l=0}^{j-1-\bar{t}}\phi_{0,l} & j\in[\bar{t}+1,\infty).
	\end{array} \label{recursion}
	\end{align}
	Moreover, we have 
	\begin{itemize}
		\item[(\RNum{1})]
		$w_j$ monotonically decreases, i.e. \begin{align}
		w_j\leq w_{j-1} \label{thm_mono}
		\end{align}for any integer $j>0$. 		
		\item[(\RNum{2})]
		for any integer $j\geq N \geq 3\bar{t}+1$, $\omega_j$ is bounded by
		\begin{align}
		\omega_j\leq \alpha^{j-N} \omega_N + (j-N)(1+\beta \bar{t})\phi_{0,0}\gamma^{j-1-3\bar{t}}.\label{omega_bound}
		\end{align}	
			\item[(\RNum{3})] the cumulative distribution function monotonically decreases with the threshold $\bar{t}$, i.e.
		\begin{align}
		\sum_{j=0}^n \omega_j(\bar{t}_1) \geq \sum_{j=0}^n \omega_j(\bar{t}_2) \label{omega_mono}
		\end{align}
		for any $n\geq 0$ and $0\leq \bar{t}_1\leq \bar{t}_2$.
		\item[(\RNum{4})]
		Regard the average of expected error covariance $\lim_{K\rightarrow\infty}\frac{1}{K}\sum_{k=1}^K \text{tr} \mathbb{E}[P_k^e]$ as a function of $\bar{t}$. Then, it monotonically increases with the threshold $\bar{t}$, i.e.
		\begin{align}
		\lim_{K\rightarrow\infty}\frac{1}{K}\sum_{k=1}^K \text{tr} \mathbb{E}[P_k^e](\bar{t}_1) \leq \lim_{K\rightarrow\infty}\frac{1}{K}\sum_{k=1}^K \text{tr} \mathbb{E}[P_k^e](\bar{t}_2),\label{mono_average}
		\end{align}	
		with $0\leq \bar{t}_1\leq \bar{t}_2$.
	\end{itemize}
\end{theorem}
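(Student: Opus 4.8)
The plan is to first derive the recursion (\ref{recursion}) from the global balance equations of the Markov chain of Lemma~\ref{markov_chain}, and then read off the four properties from it together with Lemma~\ref{lemma2} and Proposition~\ref{prop3}. For the recursion I would record three elementary balance facts. First, for $i\geq 1$ the state $(i,j)$ can only be entered from $(i-1,j-1)$, giving $\phi_{i,j}=\phi_{i-1,j-1}$ when $i-1<\bar{t}$ and $\phi_{i,j}=\alpha\,\phi_{i-1,j-1}$ when $i-1\geq\bar{t}$; iterating along the no-transmission band yields the diagonal identity $\phi_{i,j}=\phi_{0,\,j-i}$ for all $i<\bar{t}$. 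Second, the state $(0,j)$ with $j\geq 1$ is entered only from transmitting states that are not overheard, so $\phi_{0,j}=\beta\sum_{i\geq\bar{t}}\phi_{i,j-1}$. Third, cutting the chain between eavesdropper level $j-1$ and level $j$ and using $\alpha+\beta=1-\lambda_e$ gives the compact relation
\begin{align}
\omega_j=\omega_{j-1}-\lambda_e\!\!\sum_{i\geq\bar{t},\,i\in\Gamma_{j-1}}\!\!\phi_{i,j-1}=\omega_{j-1}-\tfrac{\lambda_e}{\beta}\,\phi_{0,j},\qquad j\geq 1. \nonumber
\end{align}
Substituting the diagonal identity and the two cases of Lemma~\ref{lemma2} into $\omega_j=\sum_{i\in\Gamma_j}\phi_{i,j}$ and collecting terms then reproduces (\ref{recursion}); the constant input $\phi_{0,0}$ comes from the telescoping identity $\phi_{0,j}+(1-\alpha)\sum_{l=0}^{j-1}\phi_{0,l}=\phi_{0,0}$ valid for $j\leq\bar{t}$, and the correction $-\lambda_e\sum_{l=0}^{j-1-\bar{t}}\phi_{0,l}$ appears for $j\geq\bar{t}+1$ once the low band crosses the threshold.

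Property (\RNum{1}) is then immediate: since $\phi_{0,j}\geq 0$, the relation $\omega_j-\omega_{j-1}=-\tfrac{\lambda_e}{\beta}\phi_{0,j}\leq 0$ gives the monotone decrease. For (\RNum{2}) I would view (\ref{recursion}) as the scalar LTI recursion $\omega_j=\alpha\omega_{j-1}+u_j$ with input $u_j=\phi_{0,0}-\lambda_e\sum_{l=0}^{j-1-\bar{t}}\phi_{0,l}$, which, using $\phi_{0,0}=\lambda_e\pi_0$ and $\pi_0=\sum_{l\geq0}\phi_{0,l}$, equals the tail $u_j=\lambda_e\sum_{l\geq j-\bar{t}}\phi_{0,l}\geq 0$. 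Unrolling from $N$ gives $\omega_j=\alpha^{j-N}\omega_N+\sum_{m=N+1}^{j}\alpha^{j-m}u_m$. Bounding each $u_m$ by Proposition~\ref{prop3}(\RNum{1}), namely $u_m\leq\lambda_e\phi_{0,0}\gamma^{m-2\bar{t}}/(1-\gamma)$ once $m>2\bar{t}$ (guaranteed by $N\geq 3\bar{t}+1$), and using $\alpha\leq\gamma$ to dominate the convolution by $(j-N)\gamma^{j-2\bar{t}}$, reduces (\ref{omega_bound}) to the elementary estimate $\lambda_e\gamma^{\bar{t}+1}/(1-\gamma)\leq 1+\beta\bar{t}$, where the simplification $\gamma^{\bar{t}+1}=\sqrt{1-\lambda_e}$ is convenient; this last inequality is a routine bound on $1/(1-\gamma)=1/\bigl(1-(1-\lambda_e)^{1/(2(\bar{t}+1))}\bigr)$.

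The crux is Property (\RNum{3}), the monotonicity of the cumulative distribution in the threshold. Summing the relation above gives the clean reformulation $\omega_n=\tfrac{\lambda_e}{\beta}\bigl(\pi_0-G_n\bigr)$ with $G_n:=\sum_{l=0}^{n}\phi_{0,l}$ and $\pi_0=\tfrac{\lambda}{\lambda\bar{t}+1}$, so that $\sum_{j=0}^n\omega_j=\tfrac{\lambda_e}{\beta}\sum_{m=0}^{n}\bigl(\pi_0-G_m\bigr)$. I must therefore show that the summed tail $\sum_{m\leq n}(\pi_0-G_m)$ decreases as $\bar{t}$ grows. The difficulty is that this pits two competing effects against each other: Proposition~\ref{prop3}(\RNum{3}) gives that $G_m$ decreases in $\bar{t}$ (which would push each tail $\pi_0-G_m$ up), while $\pi_0$ simultaneously decreases in $\bar{t}$ (which pushes it down). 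Resolving this balance is the main obstacle, and I expect it cannot be read off from Proposition~\ref{prop3}(\RNum{3}) alone; instead I would compare thresholds $\bar{t}$ and $\bar{t}+1$ directly and argue by induction on the delayed recursion of Lemma~\ref{lemma2}, showing that the extra reset opportunities created by the smaller threshold dominate term by term so that $\pi_0-G_m$ is itself nonincreasing in $\bar{t}$.

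Finally, Property (\RNum{4}) follows from (\RNum{3}) by summation by parts. Writing $a_j:=\text{tr}\,f^j(\bar{P})$, which is nondecreasing in $j$ by (\ref{p_mono}) and monotonicity of the trace, and using the normalization $\sum_{j\geq0}\omega_j=1$, Abel's identity gives $\sum_{j}\omega_j a_j=a_0+\sum_{l\geq1}(a_l-a_{l-1})\bigl(1-\sum_{j=0}^{l-1}\omega_j\bigr)$. Each increment $a_l-a_{l-1}\geq 0$, and by (\RNum{3}) the factor $1-\sum_{j=0}^{l-1}\omega_j$ increases with $\bar{t}$; hence the whole sum increases with $\bar{t}$, which is exactly (\ref{mono_average}).
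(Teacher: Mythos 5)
Your derivation of the recursion is correct: the cut relation $\omega_j=\omega_{j-1}-\frac{\lambda_e}{\beta}\phi_{0,j}$ is a valid consequence of the chain structure and is algebraically equivalent to (\ref{recursion}) (the paper instead manipulates $\omega_j=\sum_{i\in\Gamma_j}\phi_{i,j}$ term by term, see (\ref{omega_eq})); with it, your proof of (\RNum{1}) is immediate and cleaner than the paper's induction, and your Abel-summation proof of (\RNum{4}) is a tidy repackaging of the paper's telescoping argument. The first genuine gap is in (\RNum{2}): the ``elementary estimate'' $\lambda_e\gamma^{\bar{t}+1}/(1-\gamma)\le 1+\beta\bar{t}$ that your argument reduces to is false. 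Since $1-\gamma=1-(1-\lambda_e)^{1/(2(\bar{t}+1))}\le -\ln(1-\lambda_e)/(2(\bar{t}+1))$, the left side is at least $2(\bar{t}+1)\lambda_e\sqrt{1-\lambda_e}/(-\ln(1-\lambda_e))$, which grows linearly in $\bar{t}$ with a slope that generically exceeds $\beta$; concretely, for $\lambda=\lambda_e=0.3$ and $\bar{t}=10$ your left side is about $15.6$ while $1+\beta\bar{t}=3.1$. The loss comes from bounding the input by the \emph{infinite} tail $\lambda_e\sum_{l\ge m-\bar{t}}\phi_{0,l}$ with a geometric series, which costs a factor $1/(1-\gamma)\approx 2(\bar{t}+1)/\lambda_e$. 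The paper avoids this by keeping the input in its finite-sum form $\Theta_m=\beta\sum_{l=m-2\bar{t}}^{m-1-\bar{t}}\phi_{0,l}+\phi_{0,m-\bar{t}}$ from (\ref{phi}), which has only $\bar{t}+1$ terms, each bounded by $\gamma^{m-3\bar{t}}\phi_{0,0}$ via (\ref{phi_bounded}); this produces the constant $1+\beta\bar{t}$ directly. Your step is repairable by that substitution, but as written it fails.

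The gap in (\RNum{3}) is fatal to your plan, not just to a constant. You propose to prove that each tail $\pi_0-G_m$ is nonincreasing in $\bar{t}$. But by your own identity $\omega_m=\frac{\lambda_e}{\beta}(\pi_0-G_m)$, with $\lambda_e/\beta$ independent of $\bar{t}$, this would make every single $\omega_m$ nonincreasing in $\bar{t}$; combined with the strict decrease of $\omega_0=\lambda_e/(\lambda\bar{t}+1)$, this contradicts the normalization $\sum_{m\ge 0}\omega_m(\bar{t})=1$ holding for every $\bar{t}$ --- the probability masses cannot all move downward. Only the partial-sum (stochastic-dominance) statement (\ref{omega_mono}) can be true, so any correct proof must work with summed quantities throughout rather than term by term. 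That is exactly what the paper does: it first establishes monotonicity in $\bar{t}$ of the accumulated inputs $\omega_0+\sum_{j=1}^n\Theta_j$, i.e.\ (\ref{sum_mono}), using (\ref{lemma23}) and (\ref{phi_mono}), and then propagates it through $\sum_{j=0}^n\omega_j=\omega_0+\alpha\sum_{j=0}^{n-1}\omega_j+\sum_{j=1}^n\Theta_j$ by induction on $n$. You correctly identified the competition between the decrease of $\pi_0$ and the decrease of $G_m$ as the crux, but the term-by-term resolution you propose is provably impossible.
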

		Detailed proof is provided in Appendix \ref{pfThm5}. Note that since $0<\alpha<\gamma<1$, it can be easily seen from (\ref{omega_bound}) that
\begin{align}
\lim_{j\rightarrow \infty} \omega_j = 0.\label{omega_converge}
\end{align}

\section{Optimization Algorithm}
By virtue of the monotonicity of $\lim_{K\rightarrow\infty}\frac{1}{K}\sum_{k=1}^K \text{tr} \mathbb{E}[P_k^e](\bar{t})$ in (\ref{mono_average}), we can apply a bisection algorithm to solve the optimization problem. Before that, we truncate the infinite series in (\ref{infinite_sum}) and construct a lower bound of $\lim_{K\rightarrow\infty}\frac{1}{K}\sum_{k=1}^K \text{tr} \mathbb{E}[P_k^e](\bar{t})$ in Theorem \ref{thm6} to avoid the summation of infinite series.

\begin{theorem}
	\label{thm6}
	Define a function $L(N,\bar{t})=\sum_{j=0}^{N} \omega_j \text{tr} f^j(\bar{P})+\left(1-\sum_{j=0}^{N} \omega_j\right)\text{tr} f^{N+1}(\bar{P})$. Then, we have
	\begin{itemize}
		\item[(\RNum{1})] for any integer $N>0$, $L(N,\bar{t})$ is a lower bound of $\lim_{K\rightarrow\infty}\frac{1}{K}\sum_{k=1}^K \text{tr} \mathbb{E}[P_k^e](\bar{t})$, i.e.
		\begin{align}
		L(N,\bar{t}) \leq  \lim_{K\rightarrow\infty}\frac{1}{K}\sum_{k=1}^K \text{tr} \mathbb{E}[P_k^e](\bar{t}); \label{lower}
		\end{align}
		\item[(\RNum{2})] there exists a function $b(N,\bar{t})$, which converges to $0$ as $N\rightarrow\infty$, upper bounding the difference between  $\lim_{K\rightarrow\infty}\frac{1}{K}\sum_{k=1}^K \text{tr} \mathbb{E}[P_k^e](\bar{t})$ and $L(N,\bar{t})$, i.e. 
		\begin{align}
		\lim_{K\rightarrow\infty}\frac{1}{K}\sum_{k=1}^K \text{tr} \mathbb{E}[P_k^e](\bar{t})-L(N,\bar{t}) \leq b(N,\bar{t})
		\end{align}
		with \begin{align}\lim_{N\rightarrow \infty} b(N) = 0.\nonumber
		\end{align}
		\item[(\RNum{3})] the function $L(N,\bar{t})$ monotonically increases with $\bar{t}$, i.e.
		\begin{align}
		L(N,\bar{t}_1) \leq  L(N,\bar{t}_2),\label{s_mono}
		\end{align}
		with $\bar{t}_1\leq\bar{t}_2$.
	\end{itemize}  \label{lower_bound}
\end{theorem}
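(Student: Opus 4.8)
The plan is to establish the three claims in turn, everything resting on two facts already available: the sequence $\text{tr} f^j(\bar{P})$ is non-decreasing in $j$ by the total ordering (\ref{p_mono}), and the weights form a probability distribution, $\sum_{j=0}^\infty \omega_j = 1$, which follows from the definition of $\omega_j$ together with the normalization (\ref{mc2}). Throughout I would abbreviate the average covariance by its series form (\ref{infinite_sum}), writing $S(\bar{t}) = \lim_{K\rightarrow\infty}\frac{1}{K}\sum_{k=1}^K \text{tr} \mathbb{E}[P_k^e] = \sum_{j=0}^\infty \omega_j \text{tr} f^j(\bar{P})$. For claim (I), note that $L(N,\bar{t})$ is obtained from $S(\bar{t})$ by replacing, for every tail index $j\geq N+1$, the covariance $\text{tr} f^j(\bar{P})$ with the single smallest such value $\text{tr} f^{N+1}(\bar{P})$, since $1-\sum_{j=0}^N\omega_j=\sum_{j=N+1}^\infty\omega_j$. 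Hence
\begin{align}
S(\bar{t}) - L(N,\bar{t}) = \sum_{j=N+1}^\infty \omega_j\left(\text{tr} f^j(\bar{P}) - \text{tr} f^{N+1}(\bar{P})\right) \geq 0,\nonumber
\end{align}
every summand being non-negative by (\ref{p_mono}). This proves (I) and simultaneously isolates the exact remainder to be controlled in (II).

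For (II), I would bound this same remainder from above by $\sum_{j=N+1}^\infty \omega_j \text{tr} f^j(\bar{P})$ and estimate the two factors separately. Since $\rho(A)\leq1$, a Jordan-form argument yields a polynomial bound $\|A^l\|\leq c(1+l)^d$, whence $\text{tr} f^j(\bar{P})\leq c'(1+j)^{2d+1}$ for suitable constants $c',d$; this is the one place where the marginally stable case $\rho(A)=1$ genuinely differs from the strictly stable one, where $\text{tr} f^j(\bar{P})$ is merely bounded. Combining this polynomial growth with the exponential decay bound (\ref{omega_bound}) of Theorem \ref{thm5}, the product $\omega_j\text{tr} f^j(\bar{P})$ is summable, and I would take $b(N,\bar{t})$ to be the resulting convergent tail, for instance
\begin{align}
b(N,\bar{t})=c'\sum_{j=N+1}^\infty (1+j)^{2d+1}\Big(\alpha^{j-N}\omega_N + (j-N)(1+\beta\bar{t})\phi_{0,0}\gamma^{j-1-3\bar{t}}\Big).\nonumber
\end{align}
Because $\alpha,\gamma<1$ while $\omega_N$ and the explicit $\gamma$-term both decay geometrically (using also (\ref{omega_converge})), geometric decay dominates polynomial growth and $b(N,\bar{t})\rightarrow0$ as $N\rightarrow\infty$. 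I expect this to be the main obstacle: making the competition between the polynomially growing covariances and the geometrically decaying weights rigorous in the marginally stable regime, where neither factor is individually bounded.

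For (III), I would recast $L$ in a ``survival'' form. Setting $c_j = \text{tr} f^{N+1}(\bar{P}) - \text{tr} f^j(\bar{P})$, which by (\ref{p_mono}) is non-negative and non-increasing in $j$ on $[0,N+1]$ with $c_{N+1}=0$, one obtains
\begin{align}
L(N,\bar{t}) = \text{tr} f^{N+1}(\bar{P}) - \sum_{j=0}^N \omega_j(\bar{t})\,c_j.\nonumber
\end{align}
Writing $W_n(\bar{t})=\sum_{j=0}^n\omega_j(\bar{t})$ and applying Abel summation gives $\sum_{j=0}^N \omega_j c_j = c_N W_N + \sum_{j=0}^{N-1}(c_j-c_{j+1})W_j$, a non-negative combination of the partial sums $W_j$ since $c_N\geq0$ and $c_j-c_{j+1}\geq0$. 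As Theorem \ref{thm5}(III), inequality (\ref{omega_mono}), supplies $W_j(\bar{t}_1)\geq W_j(\bar{t}_2)$ whenever $\bar{t}_1\leq\bar{t}_2$, the weighted sum $\sum_{j=0}^N\omega_j(\bar{t})c_j$ is larger at $\bar{t}_1$, and therefore $L(N,\bar{t}_1)\leq L(N,\bar{t}_2)$, establishing (III).
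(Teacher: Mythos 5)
Your proof is correct; part (\RNum{1}) coincides with the paper's argument, but parts (\RNum{2}) and (\RNum{3}) take genuinely different routes. For (\RNum{2}), the paper splits into the cases $\rho(A)<1$ and $\rho(A)=1$ and bounds $\text{tr}f^j(\bar{P})$ by a constant, respectively a linear function of $j$, via spectral-radius manipulations of the form $\rho(A^k\bar{P}(A^\top)^k)\leq\rho(\bar{P})\rho(A)^{2k}$, which yields a fully explicit, closed-form $b(N,\bar{t})$. Your Jordan-form bound $\|A^l\|\leq c(1+l)^d$, giving $\text{tr}f^j(\bar{P})\leq c'(1+j)^{2d+1}$, treats both cases uniformly and is in fact more robust: the spectral radius is not submultiplicative, so the paper's inequality (and hence its linear-in-$k$ growth estimate when $\rho(A)=1$) can fail for non-normal $A$; for instance, with $A$ a $2\times 2$ Jordan block with eigenvalue $1$ and $\bar{P}=Q=\mathbf{I}$, the quantity $\text{tr}f^k(\bar{P})$ grows cubically in $k$, not linearly. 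Your polynomial bound absorbs exactly this phenomenon, at the price of less explicit constants $c',d$. One small caveat: invoking (\ref{omega_bound}) presupposes $j\geq N\geq 3\bar{t}+1$, so for $N<3\bar{t}+1$ you should simply define $b(N,\bar{t})$ as the raw tail sum; this does not affect the limit $N\rightarrow\infty$, which is all that claim (\RNum{2}) requires. For (\RNum{3}), the paper only appeals to ``similar arguments'' as in item (\RNum{4}) of Theorem \ref{thm5}, which proceeds by an iterated telescoping of tail sums over the infinite series; your survival-form rewriting $L(N,\bar{t})=\text{tr}f^{N+1}(\bar{P})-\sum_{j=0}^N\omega_j c_j$ followed by Abel summation rests on the same dominance principle, but it is a self-contained finite-sum identity that reduces the monotonicity of $L$ directly to the partial-sum monotonicity (\ref{omega_mono}) through the non-negative coefficients $c_N$ and $c_j-c_{j+1}$, which is arguably cleaner and easier to verify than the paper's sketch.
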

Please refer to Appendix \ref{pfThm6} for a detailed proof. Here $N$ is the truncation horizon. Theorem \ref{lower_bound} indicates that the lower bound $L(N,\bar{t})$ is a good approximation to $\lim_{K\rightarrow\infty}\frac{1}{K}\sum_{k=1}^K \text{tr} \mathbb{E}[P_k^e]$, since the approximation error diminishes as $N$ increases to infinity. Thereafter, we replace  $\lim_{K\rightarrow\infty}\frac{1}{K}\sum_{k=1}^K \text{tr} \mathbb{E}[P_k^e](\bar{t})$ in (\ref{infinite})
with $L(N,\bar{t})$ to simplify Problem 2 as

\textbf{\underline{Problem 3:}}
\begin{align}
\min_{\bar{t}} ~& ~\bar{t}\nonumber\\
\text{s.t.} ~~& ~L(N,\bar{t})\geq \underline{b}.\label{finite}
\end{align}
The monotonicity in (\ref{s_mono}) allows us to solve Problem 3 by a simple bisection algorithm  with a finite number of iterations. Details are given in Algorithm 1.

\alglanguage{pseudocode}
\begin{algorithm}[h]
	\caption{A bisection algorithm for Problem 3}
	\label{alg1}
	\begin{algorithmic}[1]
		\State \textbf{Input} $A$, $Q$, $\bar{P}$, $\lambda$, $\lambda_e$, $\underline{b}$
		\State \textbf{Initialize} integer $N$, $t_{\max}$, $t_{\min}$ such that $L(N,t_{\min})<\underline{b}$ and $L(N,t_{\max})>\underline{b}$.
		\While {($t_{\min}<t_{\max}-1$)}
		\If{$L(N,\ceil[\bigg]{\frac{t_{\min}+t_{\max}}{2}})<\underline{b}$}
		\State Set $t_{\min}=\ceil[\bigg]{\frac{t_{\min}+t_{\max}}{2}}$
		\Else 
		\State Set $t_{\max}=\ceil[\bigg]{\frac{t_{\min}+t_{\max}}{2}}$
		\EndIf
		\EndWhile
		\State Assign $\bar{t}_N^{\circ}=t_{\max}$.
	\end{algorithmic}
\end{algorithm}
Essentially, Algorithm 1 gives the optimal solution to Problem 3, denoted as $\bar{t}_N^{\circ}$, which is also a feasible solution to Problem 2 as per (\ref{lower}). However, due to the difference between $L(N,\bar{t})$ and $\lim_{K\rightarrow\infty}\frac{1}{K}\sum_{k=1}^K \text{tr} \mathbb{E}[P_k^e](\bar{t})$, $\bar{t}_N^{\circ}$ may not be the optimum to Problem 2. Its optimality is related to $N$ and $b(N,\bar{t})$. We will show in Theorem \ref{Thm7} the optimality condition and optimality gap.

\begin{theorem}
	\label{Thm7}
	Denote the optimal solution to Problem 2 as $\bar{t}^{\star}$, then $\bar{t}_N^{\circ}=\bar{t}^{\star}$ if 
	\begin{align}
	L(N,\bar{t}^{\circ}_N-1)+b(N,\bar{t}^{\circ}_N-1)<\underline{b}.\label{optimal}
	\end{align}
	Otherwise, the gap between $t^{\star}$ and $t^{\circ}_N$ is bounded as
	\begin{align}
	0 \leq t^{\circ}_N-t^{\star} \leq t^{\circ}_N -\min\{\bar{t}:L(N,\bar{t})+b(N,\bar{t})\geq \underline{b}\},\label{op_gap}
	\end{align}
	which indicates that $\lim_{N\rightarrow\infty}t^{\circ}_N = t^{\star}$.  \label{thm_op}
\end{theorem}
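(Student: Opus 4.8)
The plan is to treat both problems as a search for the \emph{smallest integer threshold} lying in an upward-closed feasible set, and to push everything through the two-sided sandwich $L(N,\bar{t})\le G(\bar{t})\le L(N,\bar{t})+b(N,\bar{t})$, where I abbreviate $G(\bar{t}):=\lim_{K\rightarrow\infty}\frac{1}{K}\sum_{k=1}^K\text{tr}\,\mathbb{E}[P_k^e](\bar{t})$. By Theorem~\ref{thm5}(\RNum{4}) the map $\bar{t}\mapsto G(\bar{t})$ is nondecreasing and by Theorem~\ref{thm6}(\RNum{3}) so is $\bar{t}\mapsto L(N,\bar{t})$; hence the feasible sets $\{\bar{t}:G(\bar{t})\ge\underline{b}\}$ and $\{\bar{t}:L(N,\bar{t})\ge\underline{b}\}$ each have the form $\{\bar{t}\ge c\}$, so $\bar{t}^\star$ and $\bar{t}_N^\circ$ are exactly the least feasible thresholds of Problems~2 and~3. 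The whole proof then reduces to comparing membership in these two sets, using the left inequality of the sandwich (Theorem~\ref{thm6}(\RNum{1})) and the right one (Theorem~\ref{thm6}(\RNum{2})) in opposite directions.

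First I would record the trivial direction $t_N^\circ-t^\star\ge 0$: since $L(N,\bar{t})\le G(\bar{t})$, every $\bar{t}$ feasible for Problem~3 is feasible for Problem~2, so the Problem~3 feasible set is contained in that of Problem~2, and minimizing $\bar{t}$ over the smaller set gives $t_N^\circ\ge t^\star$. Next, for the optimality criterion~(\ref{optimal}), I would use that $\bar{t}_N^\circ-1$ is infeasible for Problem~3 (by minimality of $\bar{t}_N^\circ$, i.e. $L(N,\bar{t}_N^\circ-1)<\underline{b}$) and combine the hypothesis with $G\le L+b$: assuming~(\ref{optimal}), $G(\bar{t}_N^\circ-1)\le L(N,\bar{t}_N^\circ-1)+b(N,\bar{t}_N^\circ-1)<\underline{b}$, so $\bar{t}_N^\circ-1$ is infeasible for Problem~2 as well, forcing $t^\star\ge\bar{t}_N^\circ$; together with $t^\star\le t_N^\circ$ this yields the equality $t^\star=t_N^\circ$.

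For the gap bound~(\ref{op_gap}) when~(\ref{optimal}) fails, set $\bar{t}_{\mathrm{low}}=\min\{\bar{t}:L(N,\bar{t})+b(N,\bar{t})\ge\underline{b}\}$. For every $\bar{t}<\bar{t}_{\mathrm{low}}$ we have $L(N,\bar{t})+b(N,\bar{t})<\underline{b}$, hence $G(\bar{t})\le L(N,\bar{t})+b(N,\bar{t})<\underline{b}$, so $\bar{t}$ is infeasible for Problem~2; therefore $t^\star\ge\bar{t}_{\mathrm{low}}$, which rearranges to $t_N^\circ-t^\star\le t_N^\circ-\bar{t}_{\mathrm{low}}$, the asserted right-hand inequality. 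Finally, for $\lim_{N\rightarrow\infty}t_N^\circ=t^\star$ I would invoke $b(N,\bar{t})\to 0$ with the sandwich to obtain $L(N,\bar{t})\to G(\bar{t})$ pointwise (in fact $L(N,\bar{t})$ increases to $G(\bar{t})$ in $N$); then for each fixed $\bar{t}$ with $G(\bar{t})>\underline{b}$ the threshold $\bar{t}$ becomes feasible for Problem~3 for all large $N$, which pins $t_N^\circ$ down to $t^\star$.

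The step I expect to need the most care is this last convergence claim. In the generic case $G(t^\star)>\underline{b}$ the argument just given is clean, but it is genuinely delicate in the boundary case $G(t^\star)=\underline{b}$: there $L(N,t^\star)<\underline{b}$ for every finite $N$, so the strict-feasibility argument breaks and the gap bound $t_N^\circ-t^\star\le t_N^\circ-\bar{t}_{\mathrm{low}}$ can fail to collapse even though $\bar{t}_{\mathrm{low}}\uparrow t^\star$. I would handle this either by verifying directly that $t_N^\circ-\bar{t}_{\mathrm{low}}\to 0$ under a mild strict-feasibility (genericity) assumption $G(t^\star)>\underline{b}$, or by noting that an arbitrarily small relaxation of $\underline{b}$ restores the strict case. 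Everything else in the theorem is bookkeeping with the two monotone sandwiching inequalities and the characterization of $\bar{t}^\star,\bar{t}_N^\circ$ as least feasible thresholds.
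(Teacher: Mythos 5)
Your proposal follows essentially the same route as the paper's proof: both characterize $\bar{t}^{\star}$ and $\bar{t}_N^{\circ}$ as the least feasible thresholds of monotone (upward-closed) feasibility sets, get $t_N^{\circ}-t^{\star}\geq 0$ from the lower-bound inequality $L(N,\bar{t})\leq G(\bar{t})$ (writing $G(\bar{t})$ for the eavesdropper's average covariance, as in your proposal), prove the optimality criterion (\ref{optimal}) by combining $G\leq L+b$ at $\bar{t}_N^{\circ}-1$ with feasibility of $\bar{t}_N^{\circ}$ for Problem 3, and obtain (\ref{op_gap}) by showing $t^{\star}\geq\min\{\bar{t}:L(N,\bar{t})+b(N,\bar{t})\geq\underline{b}\}$. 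The only divergence is the final limit $\lim_{N\rightarrow\infty}t_N^{\circ}=t^{\star}$, and your caution there is warranted: the paper simply asserts that $b(N,\bar{t})\rightarrow 0$ forces $t_N^{\circ}-\min\{\bar{t}:L(N,\bar{t})+b(N,\bar{t})\geq\underline{b}\}\rightarrow 0$ and invokes the sandwich theorem, which is valid in the generic case $G(t^{\star})>\underline{b}$ (your argument: $L(N,t^{\star})\geq G(t^{\star})-b(N,t^{\star})>\underline{b}$ for all large $N$, so $t_N^{\circ}=t^{\star}$ eventually), but fails in precisely the boundary case $G(t^{\star})=\underline{b}$ that you flag, where $L(N,t^{\star})<\underline{b}$ for every finite $N$ and hence $t_N^{\circ}\geq t^{\star}+1$ for all $N$. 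So your proof is not missing anything relative to the paper's; if anything it is sharper, since the paper's unconditional convergence claim implicitly needs the strict-feasibility condition you make explicit (or the small relaxation of $\underline{b}$ you suggest as an alternative fix).
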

A proof is given in Appendix \ref{pf_thm7}.
Theorem \ref{thm_op} suggests that the solution we obtained from Problem 3 can be arbitrarily close to the optimum of Problem 2 by increasing the truncation horizon $N$. 

\section{Simulation}
We consider a linear time invariant system with system matrix
\begin{align}
A = & \left[\begin{array}{cc}
 0.95 & 0.85 \\
 0 & 0.99
\end{array}\right] \quad C = \left[\begin{array}{cc}
1 & 1
\end{array}\right]\nonumber\\
R = &0.01 \quad Q = \left[\begin{array}{cc}
0.0425 & 0.02\\
0.02 & 0.0425
\end{array}\right].\nonumber
\end{align}
It can be easily obtained that the spectral radius of $A$ is $0.99$ and the steady state error covariance $\bar{P}$ is computed as
\begin{align}
\bar{P} = \left[\begin{array}{cc}
0.0182 & -0.0139\\
-0.0139 & 0.0189
\end{array}\right].\nonumber
\end{align}
The packet reception probability is set as $\lambda= 0.3$ and the eavesdropping probability $\lambda_e = 0.3$. We test the performance of the scheduling method by varying the parameter $\underline{b}$, which denotes the user specified lower bound on eavesdropper's error covariance,  from $5$ to $100$. Other parameters in Algorithm 1 are taken as $N=300$, $\bar{t}_{\min}=1$ and $\bar{t}_{\max}=30$. A Monte Carlo simulation of length $10^6$ is conducted to obtain the values of $\text{tr}\mathbb{E}[P_k]$ and $\text{tr}\mathbb{E}[P_k^e]$. Fig. \ref{varying_b} (1) plots the corresponding transmission threshold obtained from Algorithm 1. Fig. \ref{varying_b} (2) compares the trace of the expected error covariance at the eavesdropper and that at the estimator, showing that the proposed scheduling strategy can efficiently suppress the growth of the expected error covariance at the legitimate estimator when the required privacy lower bound $\underline{b}$ increases. Meanwhile, we vary the truncation horizon $N$ from $35$ to $85$ to see its effects on the optimality. As shown in Fig. \ref{varying_N}, with the increase of $N$, the gap between the lower bound $L(N,\bar{t})$ and the true value of the expected error covariances gradually reduced, driving the threshold $\bar{t}_N^{\circ}$ to the optimum. 

\begin{figure}[h]
	\centering
	\includegraphics[width=0.8\textwidth]{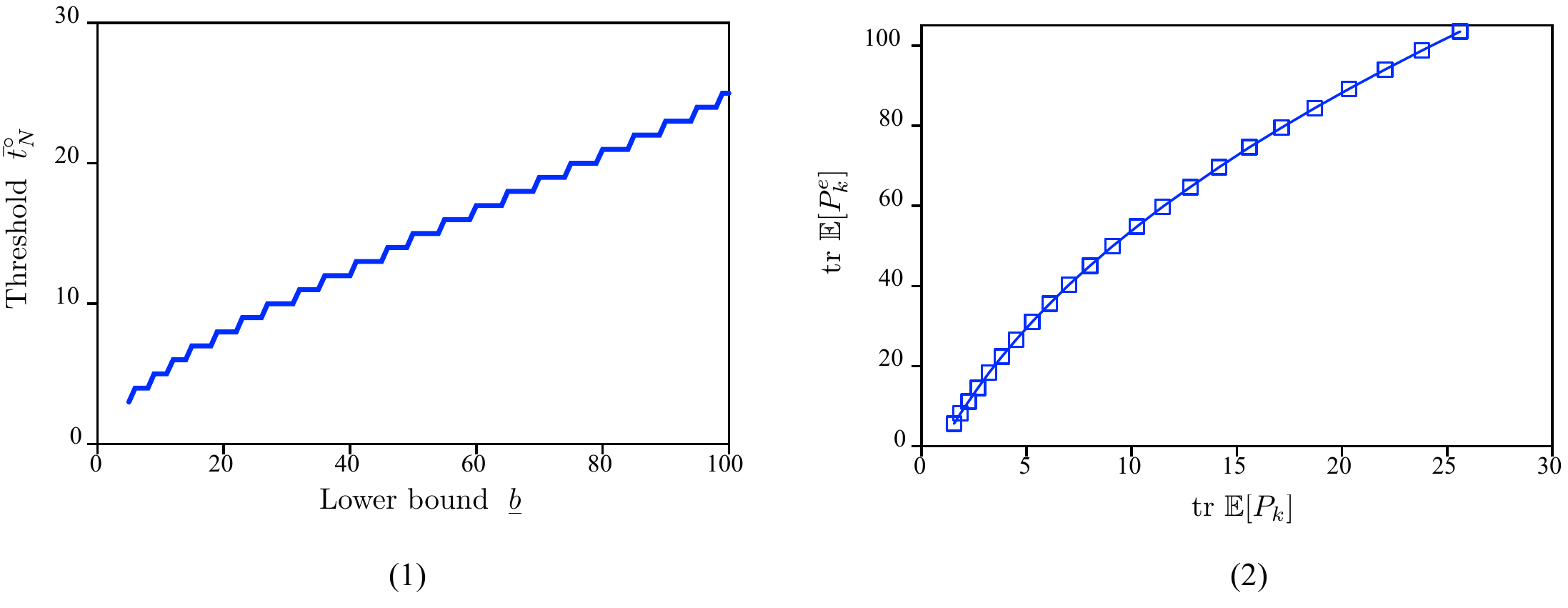}
	\caption{(1) Threshold $\bar{t}_N^{\circ}$ derived from Algorithm 1; (2) Trace of expected error covariance at estimator vs trace of expected error covariance at eavesdropper.} \label{varying_b}
\end{figure}

\begin{figure}[h]
	\centering
	\includegraphics[width=0.8\textwidth]{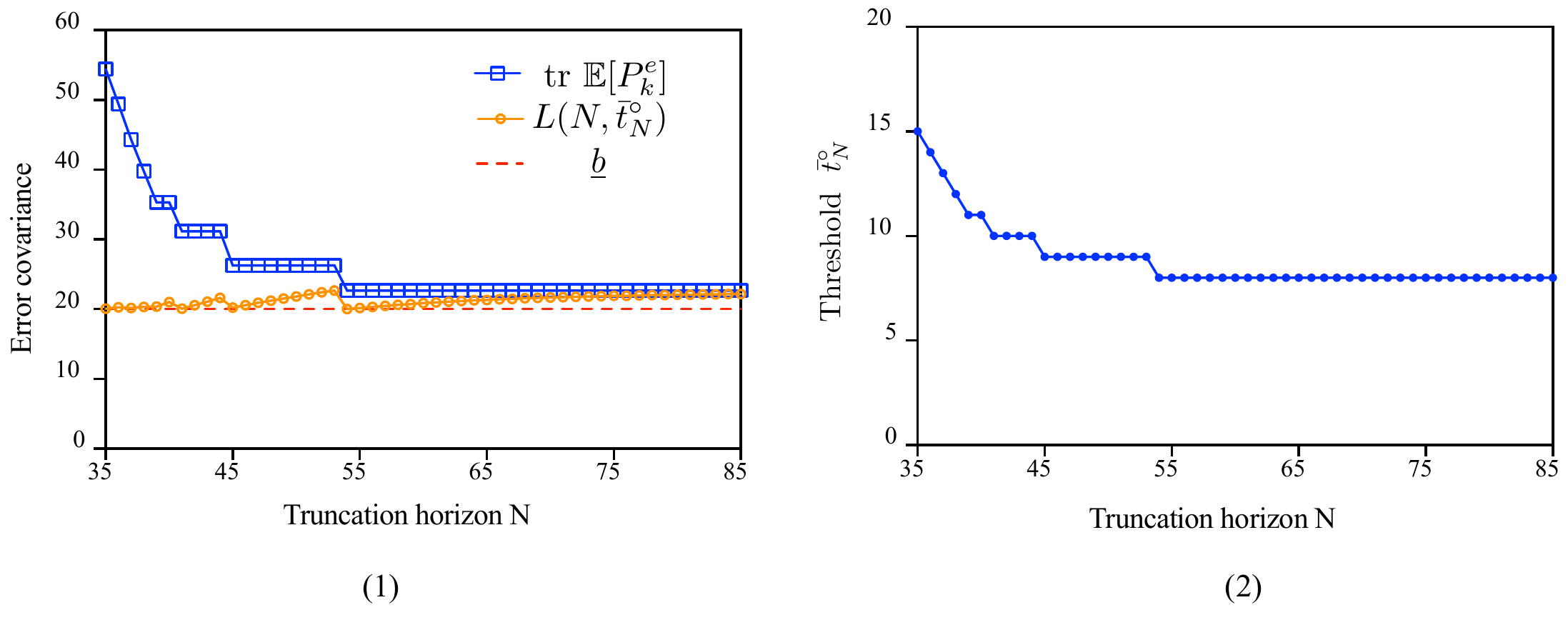}
	\caption{(1) Expected error covariance at the eavesdropper: lower bound $L(N,\bar{t})$ vs $\text{tr} \mathbb{E}[P_k^e]$ obtained from simulation  (2) Threshold $\bar{t}_N^{\circ}$ derived from Algorithm 1.} \label{varying_N}
\end{figure}

\section{Conclusion}
In this paper, we propose a threshold-type event-triggered transmission scheduling strategy for stable and marginally stable systems with eavesdroppers. An optimization problem is designed by minimizing the average of the expected error covariance at the authorized estimator and formulating a hard constraint to ensure the error covariance at the eavesdropper above a certain level. To solve this problem, first, we derive a recurrence relation to describe the stationary distribution of the expected error covariance at the eavesdropper; second, we show that the error covariance converges with the truncation horizon and monotonically increases with the growth of the transmission threshold. These two properties enable us to solve the optimization problem with a simple bisection method. Numerical simulations are conducted to verify the theoretic results and illustrate its potential in secure remote estimation. 



\appendix
\section{Proof to Proposition \ref{prop_mono}}
\label{pf_prop1}
\begin{proof}
	Denote the stationary distribution corresponding to the threshold $\bar{t}_1$ as $\pi_i(\bar{t}_1)$. 
	From (\ref{dis_pi}), it can be checked that $\pi_i(\bar{t}_1)>\pi_i(\bar{t}_1+1)$ if $0\leq i\leq \bar{t}_1$ and $\pi_i(\bar{t}_1)<\pi_i(\bar{t}_1+1)$ if $i\geq\bar{t}_1+1$. Considering that $\text{tr}f^i(\bar{P})$ monotonically increases with $i$, we have
	\begin{align}
	J(\bar{t}_1+1)-J(\bar{t}_1) = & \sum_{i=\bar{t}_1+1}^{\infty} \left(\pi_i(\bar{t}_1+1)-\pi_i(\bar{t}_1)\right)\text{tr}f^i(\bar{P}) - \sum_{i=0}^{\bar{t}_1} \left(\pi_i(\bar{t}_1)-\pi_i(\bar{t}_1+1)\right)\text{tr}f^i(\bar{P}) \nonumber\\
	\geq & \sum_{i=\bar{t}_1+1}^{\infty} \left(\pi_i(\bar{t}_1+1)-\pi_i(\bar{t}_1)\right)\text{tr}f^{\bar{t}_1+1}(\bar{P}) -  \sum_{i=0}^{\bar{t}_1} \left(\pi_i(\bar{t}_1)-\pi_i(\bar{t}_1+1)\right)\text{tr}f^{\bar{t}_1}(\bar{P}) \nonumber\\
	= & \left[\left(1-\sum_{i=0}^{\bar{t}_1}\pi_i(\bar{t}_1+1)\right)-\left(1-\sum_{i=0}^{\bar{t}_1}\pi_i(\bar{t}_1)\right) \right] \text{tr} f^{\bar{t}_1+1}(\bar{P})- \sum_{i=0}^{\bar{t}_1} \left(\pi_i(\bar{t}_1)-\pi_i(\bar{t}_1+1)\right)\text{tr}f^{\bar{t}_1}(\bar{P}) \nonumber\\
	= & \sum_{i=0}^{\bar{t}_1} \left(\pi_i(\bar{t}_1)-\pi_i(\bar{t}_1+1)\right) \left(\text{tr}f^{\bar{t}_1+1}(\bar{P}) -\text{tr}f^{\bar{t}_1}(\bar{P})  \right)>0,\nonumber
	\end{align}
	which completes the proof.  $\hfill\square$
	\end{proof}
\section{Proof to Lemma \ref{lemma2}}
\label{pfLemma2}
\begin{proof}
 By closely examining Eq. (\ref{mc1}), the following patterns can be derived,
\begin{equation}
\begin{aligned}
\phi_{0,0} = & \lambda\lambda_e\sum_{i=\bar{t}}^\infty \sum_{j\in\Omega_i} \phi_{i,j} \\
\phi_{0,j+1} = & \lambda(1-\lambda_e)\sum_{i\geq\bar{t},~i\in\Gamma_j}  \phi_{i,j} \\
\phi_{i+1,j+1} =  & \phi_{i,j},\quad 0 \leq i \leq \bar{t}-1,~j\in\Omega_j \\
\phi_{i+1,j+1} =  & \alpha \phi_{i,j},\quad i \geq \bar{t},~j\in\Omega_j 
\end{aligned} \label{pattern} 	
\end{equation}
which combining together with Eq.(\ref{mc2}) gives
\begin{equation}
\begin{aligned}
	\phi_{0,0} & = \lambda \lambda_e \sum_{i=\bar{t}}^\infty \sum_{j\in\Omega_i} \phi_{i,j} = \lambda\lambda_e \sum_{i=\bar{t}}^\infty \pi_i = \lambda\lambda_e(1-\bar{t}\frac{\lambda}{\lambda\bar{t}+1})\\
	& =  \frac{\lambda\lambda_e}{\lambda\bar{t}+1} \\
	\text{For}~0 \leq j< \bar{t}: \quad \phi_{0,j+1} & = \lambda(1-\lambda_e)\sum_{i\geq\bar{t},~i\in\Gamma_j}  \phi_{i,j} = 
	 \lambda(1-\lambda_e) \sum_{i\in[\bar{t}+j+1,\infty)}\phi_{i,j} =  \lambda(1-\lambda_e) (1-\lambda)(1-\lambda_e)\sum_{j\in[\bar{t}+j,\infty)}\phi_{i,j-1} \\
	 & = (1-\lambda)(1-\lambda_e) \underbrace{\left[\lambda(1-\lambda_e)\sum_{j\in[\bar{t}+j,\infty)}\phi_{i,j-1}\right]}_{=\phi_{0,j-1}}\\
	& = \alpha\phi_{0,j-1} \\
		\text{For}~ j\geq \bar{t}: \quad \phi_{0,j+1} & =  \lambda(1-\lambda_e)\sum_{i\geq\bar{t},~i\in\Gamma_j}  \phi_{i,j} = \beta \left(\phi_{\bar{t},j}+\sum_{i\in[\bar{t}+1,j]\cup[\bar{t}+j+1,\infty)}\phi_{i,j}\right) \\
		& =  \beta \phi_{0,j-\bar{t}} + \beta \alpha \sum_{i\in[\bar{t},j-1]\cup[\bar{t}+j,\infty)}\phi_{i,j-1}\\
		& = \beta \phi_{0,j-\bar{t}} +  \alpha \left(\beta \sum_{i\in[\bar{t},j-1]\cup[\bar{t}+j,\infty)}\phi_{i,j-1}\right)\\
		& = \beta \phi_{0,j-\bar{t}} +  \alpha \phi_{0,j}.
\end{aligned} \nonumber
\end{equation}
 The recurrence relation in (\ref{lemma23}) is thus obtained.  $\hfill\square$
\end{proof}

\section{Proof to Proposition \ref{prop3}}
\label{pfprop3}
\begin{proof}
	Note that the recursion in (\ref{lemma23}) can be viewed as a linear time invariant system with constant time delay $\bar{t}+1$. Applying Theorem 1 in \cite{diblik2016exponential}, it is straightforward to obtain
	\begin{align}
	0 \leq \phi_{0,j} \leq \gamma^{j-\bar{t}}\max_{l\in[0,\bar{t}]}\phi_{0,l}=\gamma^{j-\bar{t}}\phi_{0,0},\nonumber
	\end{align}
and $0\leq \lim_{j\rightarrow \infty} \phi_{0,j}\leq \lim_{j\rightarrow \infty} \gamma^{j-\bar{t}}\phi_{0,0} = 0$. In addition, summing up $\phi_{0,j}$ from $j=0$ to $j=n$ according to (\ref{lemma23}) gives
	\begin{align}
	\sum_{j=0}^{n+1} \phi_{0,j} = \left\{\begin{array}{ll}
	\frac{\lambda\lambda_e}{\lambda\bar{t}+1}+\alpha \sum_{j=0}^n \phi_{0,j} & n\leq \bar{t}-1\\
	\frac{\lambda\lambda_e}{\lambda\bar{t}+1}+\alpha \sum_{j=0}^n \phi_{0,j} + \beta \sum_{j=0}^{n-\bar{t}}\phi_{0,j}& n\geq \bar{t} .
	\end{array}\right. \label{sum_recursion}
	\end{align}
For $n\leq \bar{t}_1$ and $\bar{t}_1\leq \bar{t}_2$, it is straightforward to see that $\sum_{j=0}^n \phi_{0,j}(\bar{t}_1) \geq \sum_{j=0}^n \phi_{0,j}(\bar{t}_2) $. For any $n>\bar{t}_1$, if $\sum_{j=0}^n \phi_{0,j}(\bar{t}_1) \geq \sum_{j=0}^n \phi_{0,j}(\bar{t}_2) $ and $\sum_{j=0}^{n-\bar{t}} \phi_{0,j}(\bar{t}_1) \geq \sum_{j=0}^{n-\bar{t}} \phi_{0,j}(\bar{t}_2) $, then $\sum_{j=0}^{n+1} \phi_{0,j}(\bar{t}_1) \geq \sum_{j=0}^{n+1} \phi_{0,j}(\bar{t}_2) $ according to (\ref{sum_recursion}). Thus, the inequality in (\ref{phi_mono}) can be proved using mathematical induction.    	$\hfill\square$
\end{proof}

\section{Proof to Theorem \ref{thm5}}
\label{pfThm5}

\begin{proof}
By iteratively applying the relations in (\ref{pattern}), the recurrence relation of $\omega_j$ can be established in a similar manner as that for $\phi_{0,j}$ . In particular, we can show that
	\begin{align}
	\omega_0 = & \sum_{i\in\Gamma_0}\phi_{i,0}=\lambda_e \sum_{i\in[\bar{t},\infty),~j\in\Omega_i}\phi_{i,j}=\lambda_e\sum_{i=\bar{t}}^{\infty}\pi_i=\lambda_e(1-\bar{t}\pi_0)\nonumber\\
	= & \frac{\lambda_e}{\lambda\bar{t}+1} \nonumber\\
	\text{For}~{j\leq\bar{t}}:\quad \omega_j= & \sum_{i\in\Gamma_j}\phi_{i,j}= \underbrace{\phi_{j,j}}_{=\phi_{0,0}} + \underbrace{\sum_{i=0}^{j-1}\phi_{i,j}}_{=\alpha \sum_{i=0}^{j-1}\phi_{i,j-1}} + \underbrace{\sum_{i=\bar{t}+j+1}^{\infty}\phi_{i,j}}_{=\alpha \sum_{i=\bar{t}+j}^{\infty}\phi_{i,j-1}} \nonumber\\
	= & \phi_{0,0} + \alpha (\sum_{\footnotesize \underbrace{i\in[0,j-1]\cup[\bar{t}+j,\infty)}_{i\in\Gamma_{j-1}}}\phi_{i,j-1})\nonumber \\
	= &  \phi_{0,0} + \alpha \omega_{j-1}\nonumber  \\
	\text{For}~{j>\bar{t}}:\quad \omega_j= &  \sum_{l=0}^{\bar{t}-1} \phi_{l,j} + \phi_{\bar{t},j}+\sum_{l=\bar{t}+1}^{j} \phi_{l,j} + \sum_{l=\bar{t}+1+j}^{\infty} \phi_{l,j}\nonumber\\
	= & \alpha\sum_{l=0}^{\bar{t}-1}\phi_{l,j-1} + \beta \sum_{l=\max(0,j-2\bar{t})}^{j-1-\bar{t}}\phi_{0,l} +  \phi_{\bar{t},j} + \alpha \sum_{l=\bar{t}}^{j-1} \phi_{l,j-1}+ \alpha \sum_{l=\bar{t}+j}^{\infty} \phi_{l,j-1} \nonumber\\
	= & \alpha (\sum_{l=0}^{\bar{t}-1}\phi_{l,j-1} +  \sum_{l=\bar{t}}^{j-1} \phi_{l,j-1}+\sum_{l=\bar{t}+j}^{\infty} \phi_{l,j-1} ) + \beta \sum_{l=\max(0,j-2\bar{t})}^{j-1-\bar{t}}\phi_{0,l} +  \phi_{0,j-\bar{t}}\nonumber\\
	= & \alpha \omega_{j-1} + \beta \sum_{l=\max(0,j-2\bar{t})}^{j-1-\bar{t}}\phi_{0,l} +  \phi_{0,j-\bar{t}}.\label{omega_eq}
	\end{align}
Define
\vspace{-2mm}
\begin{align}
\Theta_j =\left\{\begin{array}{ll}\phi_{0,0} & 1\leq j\leq \bar{t}\\
\beta \sum_{l=\max(0,j-2\bar{t})}^{j-1-\bar{t}}\phi_{0,l} +  \phi_{0,j-\bar{t}}  &  j>\bar{t}
\end{array}
\right. .\label{phi}
\end{align}
Eq. (\ref{phi}) becomes
\begin{align}
\omega_j = \alpha \omega_{j-1} + \Theta_j.\nonumber
\end{align}
For $j>\bar{t}$, we have
\vspace{-4mm}
	\begin{align}
\Theta_{j+1}= & \beta \sum_{l=\max(0,j+1-2\bar{t})}^{j-\bar{t}}\phi_{0,l} +  \phi_{0,j+1-\bar{t}} \nonumber\\
= & \left\{\begin{array}{ll}
\beta \sum_{l=0}^{j-1-\bar{t}}\phi_{0,l} + \underbrace{\beta \phi_{0,j-\bar{t}} + \alpha \phi_{0,j-\bar{t}}}_{=\phi_{0,j-\bar{t}} - (1-\alpha -\beta ) \phi_{0,j-\bar{t}}=-\lambda_e\phi_{-,j-\bar{t}}} & j\leq 2\bar{t}-1\\
\beta \sum_{l=j+1-2\bar{t}}^{j-1-\bar{t}}\phi_{0,l} + \beta \phi_{0,j-\bar{t}} + \alpha \phi_{0,j-\bar{t}} +\beta \phi_{0,j-2\bar{t}} &  j>2\bar{t}-1
\end{array} \right. \quad\text{(according to (\ref{lemma23}))} \nonumber \\
= & \Theta_j-\lambda_e\phi_{0,j-\bar{t}}.\label{pfthm_2}
\end{align}
Using mathematical induction, it can be derived that 
 \begin{align}\Theta_j = \phi_{0,0}-\lambda_e\sum_{l=0}^{j-1-\bar{t}}\phi_{0,l}.  \label{pfthm_3}
	\end{align} 
Since $\phi_{0,l}\geq0$ for each $l$, the proof for (\ref{recursion}) is completed.  
 
\textbf{(\RNum{1})} Eq. (\ref{pfthm_2}) shows that $\Theta_{j+1}\leq \Theta_{j}$ for any $j>0$, based on which (\ref{thm_mono}) can be easily proved. 
	
\textbf{(\RNum{2})} The boundedness and exponential convergence of $\omega_j$ can be obtained from Eq. (\ref{phi}) and (\ref{omega_eq}). For $j>3\bar{t}$, from (\ref{phi_bounded}), we have
	\begin{align}
	\Theta_j \leq  (1+\beta\bar{t})\phi_{0,0}\gamma^{j-3\bar{t}}.\nonumber
	\end{align}
By iteratively applying Eq. (\ref{omega_eq}), we have
	\begin{align}
	\omega_j \leq \alpha^{j-N} \omega_N + \sum_{l=N+1}^{j}\alpha^{j-l}\Theta_{l}\leq \alpha^{j-N} \omega_N + (j-N)(1+\beta\bar{t})\phi_{0,0}\gamma^{j-3\bar{t}}
	\end{align}
	since $0<\alpha< \gamma<1$.
	
	\textbf{(\RNum{3})} Note that if the inequality in (\ref{omega_mono}) holds for $\bar{t}_2=\bar{t}_1+1$, then it holds for all $\bar{t}_2>\bar{t}_1$. From (\ref{lemma23}) and (\ref{phi_mono}), we have
	 \begin{align}
	\omega_0(\bar{t}_1)+\sum_{j=1}^n\Theta_j(\bar{t}_1)\geq \omega_0(\bar{t}_2)+\sum_{j=1}^n  \Theta_j(\bar{t}_2),\label{sum_mono}
	\end{align}  
	for any $n\geq 1$. Again applying mathematical induction with respect to (\ref{omega_eq}), we have
			\begin{align}
	\sum_{j=0}^n \omega_j(\bar{t}_1) \geq \sum_{j=0}^n \omega_j(\bar{t}_2) \nonumber
	\end{align}
	which is equivalent to (\ref{omega_mono}).
	
	\textbf{(\RNum{4})} Combining $\sum_{j=0}^{\infty}\omega_j(\bar{t}_1) = \sum_{j=0}^{\infty}\omega_j(\bar{t}_2)=1$ and (\ref{omega_mono}), we have
	\begin{align}
	\sum_{j=1}^{\infty} \omega_j(\bar{t}_2)-\sum_{j=1}^{\infty} \omega_j(\bar{t}_1) \geq - (\omega_0(\bar{t}_2)-\omega_0(\bar{t}_1)).\nonumber
	\end{align}  
	Since $0\leq \text{tr} f^0(\bar{P})\leq \text{tr} f^1(\bar{P})$, we have
	\begin{align}
	\text{tr} f^1(\bar{P})\left(\sum_{j=1}^{\infty} \omega_j(\bar{t}_2)-\sum_{j=1}^{\infty} \omega_j(\bar{t}_1) \right) \geq  - \text{tr} f^0(\bar{P})(\omega_0(\bar{t}_2)-\omega_0(\bar{t}_1)),\nonumber
	\end{align}	
	thus, 
		\begin{align}
	\text{tr} f^1(\bar{P})\left(\sum_{j=2}^{\infty} \omega_j(\bar{t}_2)-\sum_{j=2}^{\infty} \omega_j(\bar{t}_1) \right) \geq  - \sum_{j=0}^1\text{tr} f^j(\bar{P})(\omega_j(\bar{t}_2)-\omega_j(\bar{t}_1)).\nonumber
	\end{align}	
	Repeating the above steps, we have
	\begin{align}
	\text{tr} f^2(\bar{P})\left(\sum_{j=2}^{\infty} \omega_j(\bar{t}_2)-\sum_{j=2}^{\infty} \omega_j(\bar{t}_1) \right) \geq & 	\text{tr} f^1(\bar{P})\left(\sum_{j=2}^{\infty} \omega_j(\bar{t}_2)-\sum_{j=2}^{\infty} \omega_j(\bar{t}_1) \right) \geq  - \sum_{j=0}^1\text{tr} f^j(\bar{P})(\omega_j(\bar{t}_2)-\omega_j(\bar{t}_1)) \nonumber\\ \text{tr} f^3(\bar{P})\left(\sum_{j=3}^{\infty} \omega_j(\bar{t}_2)-\sum_{j=3}^{\infty} \omega_j(\bar{t}_1) \right) \geq & - \sum_{j=0}^2\text{tr} f^j(\bar{P})(\omega_j(\bar{t}_2)-\omega_j(\bar{t}_1)) \nonumber\\
	\vdots & \nonumber
	\end{align}
	which leads to $\sum_{j=0}^{\infty} \Big(\omega_j(\bar{t}_2)\text{tr}f^j(\bar{P})-\omega_j(\bar{t}_1)\text{tr}f^j(\bar{P})\Big)\geq 0$, completing the proof to (\ref{mono_average}). $\hfill\square$
\end{proof}

\section{Proof to Theorem \ref{thm6}}
\label{pfThm6}
\begin{proof}
	
	(\textbf{\RNum{1}}) Since $1-\sum_{j=0}^N\omega_j=\sum_{j=N+1}^{\infty}\omega_j$ and $\text{tr}f^{N+1}(\bar{P})\leq \text{tr}f^{N+2}(\bar{P})\leq \text{tr}f^{N+3}(\bar{P}) \leq \dots$, we have
	\begin{align}
	(1-\sum_{j=0}^N\omega_j)\text{tr}f^{N+1}(\bar{P}) \leq \sum_{j=N+1}^{\infty} \omega_j \text{tr} f^j(\bar{P}),\nonumber
	\end{align}
	which proves the inequality in  (\ref{lower}).
	
	(\textbf{\RNum{2}}) Given that $\rho(A)\leq 1$, according to the monotonicity in (\ref{p_mono}),  we can derive an upper bound on $\text{tr}f^i(\bar{P})$ for any given $i$ as
	\begin{align}
	\text{tr}f^i(\bar{P})\leq & \lim_{k\rightarrow\infty} \text{tr}f^k(\bar{P})\nonumber\\
	= &\lim_{k\rightarrow\infty} \text{tr}A^k\bar{P}(A^\top)^k+\lim_{k\rightarrow\infty} \text{tr}\left(\sum_{m=0}^{k-1}A^mQ(A^\top)^m\right) \nonumber\\ 
	\leq & \lim_{k\rightarrow\infty}n_s \rho\left(A^k\bar{P}(A^\top)^k\right) + \lim_{k\rightarrow\infty}n_s\rho\left(\sum_{m=0}^{k-1}A^mQ(A^\top)^m\right) \nonumber\\
	\leq & \lim_{k\rightarrow\infty}n_s\rho(\bar{P})  \left[\rho(A)^{2k} + n_s\rho(Q)\sum_{m=0}^{k-1}\rho(A)^{2m}\right].\nonumber
	\end{align}
Here $n_s$ denotes the dimension of matrix $A$. If $\rho(A)<1$, we have
\begin{align}
\lim_{k\rightarrow\infty}n_s\rho(\bar{P})  \left[\rho(A)^{2k} + n_s\rho(Q)\sum_{m=0}^{k-1}\rho(A)^{2m}\right] = \frac{n_s\rho(Q)}{1-\rho(A)^2}.\nonumber
\end{align}
If $\rho(A)=1$, we have
\begin{align}
n_s\rho(\bar{P})  \left[\rho(A)^{2k} + n_s\rho(Q)\sum_{m=0}^{k-1}\rho(A)^{2m}\right]=n_s\rho(\bar{P})+n_s\rho(Q)k.\nonumber
\end{align}
Therefore, for $\rho(A)<1$, it can be obtained that 	
	\begin{align}
	\sum_{j=N+1}^{\infty} \omega_j \text{tr} (f^j(\bar{P})) \leq & \left(\sum_{j=N+1}^{\infty} \omega_j \right)	\frac{n_s\rho(Q)}{1-\rho(A)^2}
	\leq  \left(\frac{\alpha}{1-\alpha}\omega_N +\gamma^{N-3\bar{t}}(1+\beta\bar{t})\phi_{0,0}\frac{\gamma}{(1-\gamma^2)} \right) \frac{n_s\rho(Q)}{1-\rho(A)^2}.\nonumber
	\end{align}
	Similarly, for $\rho(A)=1$, it can be proved that
	\begin{align}
	\sum_{j=N+1}^{\infty} \omega_j \text{tr} (f^j(\bar{P})) \leq  & \sum_{j=N+1}^{\infty}  \left(\alpha^{j-N} \omega_N + (j-N)(1+\beta \bar{t})\phi_{0,0}\gamma^{j-3\bar{t}}\right)\left(	n_s\rho(\bar{P})+n_s\rho(Q)j\right) \nonumber\\
	\leq & \left(\frac{\alpha}{1-\alpha}(n_s\rho(\bar{P})+n_s\rho(Q)N)+\frac{\alpha}{(1-\alpha)^2}n_s\rho(Q)\right)\omega_N \nonumber\\
	&+ (1+\beta\bar{t})\phi_{0,0}n_s\gamma^{N-3\bar{t}}\left((\rho(\bar{P})+N\rho(Q))\frac{\gamma}{(1-\gamma)^2}+\rho(Q)\frac{\gamma}{(1-\gamma)^3}\right).\nonumber
	\end{align}
	Thus, define $b(N,\bar{t})$ as
	\begin{align}
	b(N,\bar{t})=\left\{
	\begin{array}{ll}
\left(\frac{\alpha}{1-\alpha}\omega_N +\gamma^{N-3\bar{t}}\phi_{0,0}(1+\beta\bar{t})\frac{\gamma}{(1-\gamma^2)} \right) \frac{n_s\rho(Q)}{1-\rho(A)^2}-(1-\sum_{j=0}^N\omega_j) \text{tr}f^{N+1}(\bar{P}) &  \rho(A)<1\\
	\left(\frac{\alpha}{1-\alpha}(n_s\rho(\bar{P})+n_s\rho(Q)N)+\frac{\alpha}{(1-\alpha)^2}n_s\rho(Q)\right)\omega_N + (1+\beta\bar{t})\phi_{0,0}n_s\gamma^{N-3\bar{t}} & \rho(A) = 1\\
		\times \left((\rho(\bar{P})+N\rho(Q))\frac{\gamma}{(1-\gamma)^2}+\rho(Q)\frac{\gamma}{(1-\gamma)^3}\right) -(1-\sum_{j=0}^N\omega_j) \text{tr}f^{N+1}(\bar{P}) &
	\end{array}	
	\right. . \nonumber
	\end{align}
	Since the terms $\omega_N$, $1-\sum_{j=0}^N\omega_j$ and $\gamma^{N}$ converge to $0$, we have $\lim_{N\rightarrow\infty}b(N,\bar{t}) = 0$ for any $\bar{t}$, which completes the proof. 
	
	\textbf{(\RNum{3})} The inequality in (\ref{lower_bound}) can be proved  by similar arguments as in the proof to Item (\RNum{4}) of Theorem \ref{thm5}. $\hfill\square$
\end{proof}

\nopagebreak[4]
\section{Proof to Theorem \ref{Thm7} }
\label{pf_thm7}
\begin{proof}
Problem 2's optimal solution $\bar{t}^{\star}$ should be the smallest integer making $\sum_{j=0}^{\infty}\omega_j \text{tr} (f^j(\bar{P}) \geq \underline{b}$ hold. Then, we have
\begin{align}
\sum_{j=0}^{\infty}\omega_j (\bar{t}^{\star}-1)\text{tr} f^j(\bar{P}) <& \underline{b},\label{op_c1}\\
\sum_{j=0}^{\infty}\omega_j (\bar{t}^{\star})\text{tr} f^j(\bar{P}) \geq & \underline{b}.\label{op_c2}
\end{align}	
Since the function $\sum_{j=0}^{\infty}\omega_j (\bar{t})\text{tr} (f^j(\bar{P}))$ is monotone with $\bar{t}$, for any $\bar{t}$ which satisfies (\ref{op_c1}) and (\ref{op_c2}), $\bar{t}$ should be the optimal solution. As it is proved that $L(N,\bar{t})+b(N,\bar{t})$ is an upper bound of $\sum_{j=0}^{\infty}\omega_j (\bar{t})\text{tr} (f^j(\bar{P}))$, any $\bar{t}_N^{\circ}$ satisfying 
(\ref{optimal}) will give
\begin{align}
\sum_{j=0}^{\infty}\omega_j (\bar{t}_N^{\circ}-1)\text{tr} f^j(\bar{P}) < \underline{b}.\nonumber
\end{align}
In addition, $\bar{t}_N^{\circ}$ is a feasible solution to Problem 3, ensuring that 
\begin{align}
\sum_{j=0}^{\infty}\omega_j (\bar{t}_N^{*})\text{tr} f^j(\bar{P}) \geq & \underline{b}.\nonumber
\end{align}
Therefore, given that $\bar{t}_N^{\circ}$ satisfies (\ref{optimal}), we have $\bar{t}_N^{\circ}=\bar{t}^{\star}$. 

Given that $\bar{t}^{\star}$ is the optimal solution to Problem 2, $\bar{t}^{\star}$ must be feasible for the constraint and we have
\begin{align}
L(N,\bar{t}^{\star}) + b(N,\bar{t}^{\star})\geq\sum_{j=0}^{\infty}\omega_j (\bar{t}^{\star})\text{tr} (f^j(\bar{P})\geq \underline{b}.\nonumber
\end{align}
Thus, $\bar{t}^{\star}\geq \min\{\bar{t}:L(N,\bar{t})+b(N,\bar{t})\geq \underline{b}\}$, completing the proof to (\ref{op_gap}). When $N\rightarrow\infty$, we have $\lim_{N\rightarrow\infty}b(N,\bar{t})=0$ for any bounded $\bar{t}$, thus, $\lim_{N\rightarrow\infty} t^{\circ}_N -\min\{\bar{t}:L(N,\bar{t})+b(N,\bar{t})\geq \underline{b}\}=0$. Then, it can be proved by the sandwich theorem that $\lim_{N\rightarrow\infty} t^{\circ}_N=\bar{t}^{\star}$. $\hfill\square$
\end{proof}

\bibliography{sample}

\end{document}